\newtheorem{lemma}{\textbf{Lemma}}
\newtheorem{definition}{\textbf{Definition}}
\newtheorem{thm}{\textbf{Theorem}}
\newtheorem{clm}{\textbf{Claim}}
\newtheorem{fact}{\textbf{Fact}}
\newtheorem{example}{\textbf{Example}}
\newtheorem{remark}{\textbf{Remark}}
\newtheorem{question}{\textbf{Question}}
\newcommand{\alic}[1]{{\color{red}[\textbf{Ali: }\emph{#1}]}}
\newcommand{\ie}{{i.e.}}
\newcommand{\eg}{{e.g.}}
\renewcommand{\S}{Section~}
\newcommand{\mc}[1]{\mathcal{#1}}
\renewcommand{\Pr}[1]{\ensuremath{\operatorname{\mathbf{Pr}}\left[#1\right]}}
\newcommand{\Ex}[1]{\ensuremath{\operatorname{\mathbf{E}}\left[#1\right]}}
\newcommand{\Var}[1]{\ensuremath{\operatorname{\mathbf{Var}}\left[#1\right]}}
\title{Proximity-Aware Balanced Allocations \\ in Cache Networks}
\author{
    \IEEEauthorblockN{Ali Pourmiri\IEEEauthorrefmark{1}, Mahdi Jafari Siavoshani\IEEEauthorrefmark{2}, Seyed Pooya Shariatpanahi\IEEEauthorrefmark{1} }
    \IEEEauthorblockA{\IEEEauthorrefmark{1}School of Computer Science\\
    Institute for Research in Fundamental Sciences (IPM),
    Tehran, Iran
    \\\{pourmiri, pooya\}@ipm.ir}
    \IEEEauthorblockA{\IEEEauthorrefmark{2} Department of Computer Engineering\\
    Sharif University of Technology, Tehran, Iran\\
    {mjafari}@sharif.edu}
} 
\thanks{This work was funded in part by ...
    }
\begin{document}
\maketitle

\begin{abstract}
We consider load balancing in a network of caching servers delivering contents to end users.
 Randomized load balancing via the so-called \emph{power of two choices} is a well-known approach in parallel and distributed systems that reduces network imbalance. In this paper, we propose a randomized load balancing scheme which simultaneously considers cache size limitation and proximity in the server redirection process.

Since the memory limitation and the proximity constraint cause correlation in the server selection process, we may not benefit from the power of two choices in general.  
  However, we prove that in certain regimes, in terms of memory limitation and proximity constraint, our scheme results in the maximum load of order $\Theta(\log\log n)$ (here $n$ is the number of servers and requests), and at the same time, leads to a low communication cost. This is an exponential improvement in the maximum load compared to the scheme which assigns each request to the nearest available replica. 
Finally, we investigate our scheme performance by extensive simulations.  
\end{abstract}

\centerline{\textbf{Keywords}}
Randomized Algorithms, Distributed Caching Servers, Request Routing, Load Balancing,  Communication Cost, Balls-into-Bins, Content Delivery Networks.

\section{Introduction}\label{sec:Introduction}
\subsection{Problem Motivation}
Advancement of technology leads to the spread of smart multimedia-friendly communication devices to the masses which causes a rapid growth of demands for data communication \cite{Cisco_Report}. Although
Telcos have been spending hugely on telecommunication infrastructures,  they cannot keep up with this data demand explosion.
Caching predictable data in network off-peak hours, near end users, has been proposed as a promising solution to this challenge. This approach has been used extensively in content delivery networks (CDNs) such as Akamai, Azure, Amazon CloudFront, etc. \cite{Zhang2013}, \cite{NygrenSS10}, and mobile video delivery  \cite{GolrezaeiHelper}. In this approach, a \emph{cache network} is usually referred to as a set of caching servers that are connected over a network, giving content delivery service to end users.

In cache networks, load balancing is one of the most important challenges when assigning requests to servers. This assignment strategy is implemented either at network-side or client-side. In the first approach there is a centralized authority which maps requests to servers, while balancing out the load. This authority employs network status information to optimally allocate requests to servers, which often involves complex algorithms. However, in the latter, the clients choose their favorite servers in a distributed fashion. In this paper we focus on the distributed server selection approach.

Randomized load balancing via the so-called ``power of two choices'' is a well-investigated paradigm in parallel and distributed settings \cite{ABKU99,Mitzenmacher01,AdlerCMR98,LenzenW16}. In this approach, upon arrival of a request, the corresponding user will query about current load of two independently at random chosen  servers, and then allocates the request to the least loaded server.
Berenbrink et al. \cite{BCSV06} showed that in this scheme after allocating  $m$ balls (requests, tasks, ...) to $n$ bins (servers, machines, ...) the   maximum number of balls assigned to any bin, called {\it maximum load}, is at most $m/n+O(\log\log n)$ with high probability. This  only deviates $O(\log\log n)$ from the average load and the deviation  depends on the number of servers. However, in many settings, selecting any two random servers  might be infeasible or costly. For example \emph{proximity principle} in CDNs for server selection is essential to reduce communication cost; \ie, each request should be redirected to a nearby server.	
	
Considering this constraint, Kenthapadi and Panigrahi \cite{KP06} proposed a model where $n$ bins are organized as a $d$-regular graph. Corresponding to each ball, a node is chosen uniformly at random as the first candidate. Then, one of its neighbours is chosen uniformly at random as the second candidate and the ball is allocated to the one with the minimum load. 
Under this assumption, they proved that if the graph is sufficiently dense (\ie, the average degree is $n^{\Omega(\log\log n/\log n)}$), then after allocating  $n$ balls the maximum load is $\Theta(\log\log n)$ with high probability.
Although the model fairly considers the proximity principle, due to the cache limitation it cannot be directly applied  in cache networks.

In summary, the proximity principle can be in tension with load balancing in many situations, as nearby users may be congested. This leads to a fundamental trade-off between the maximum load and the communication cost.
 Hence, designing a distributed assignment strategy to handle this trade-off optimally is a central and challenging goal in cache networks.

\subsection{Problem Setting and Our Contributions}
While many authors have used the idea of power of two choices in server-selection algorithms, theoretical foundations of this phenomena in the context of cache networks with communication cost, has not yet been investigated. In this paper, we consider a general cache network model that entails basic characteristics of many practical scenarios. We consider a grid network of $n$ servers, each equipped with a cache of size $M$. Also there are $n$ sequential file requests, from a library of size $K$, distributed among servers uniformly at random.
Let us assume a popularity distribution $\mathcal{P}=\{p_1,\dots,p_K\}$ for the library. We assume cache placement at each server is proportional to this popularity distribution. Every server either serves its requests or redirects them (via an assignment scheme) to other nodes which have cached the files. We define the maximum load of an assignment scheme as the maximum number of allocations to any single server after assigning all requests. The communication cost is the average number of hops required to deliver requested file to its request origin. 

In the simplest assignment scheme, each request arrived at every server should be dispatched to the nearest file replica. This scheme results in the minimum communication cost, while ignoring maximum load of servers. We show that, for every  constant $0<\alpha<1/2$, if  $K=n$, $M=n^{\alpha}$, and $\mathcal{P}$ is a uniform distribution, this scheme will result in the maximum load in the interval $[\Omega(\log n/\log\log n), O(\log n)]$ with high probability\footnote{With high probability refers to an event that happens with probability $1-1/n^c$, for some constant $c>0$.} (w.h.p.).  Moreover, for
every constant $0<\epsilon<1$, if 
 $K=n^{1-\epsilon}$ and $M=\Theta(1)$, then  the maximum load is $\Theta(\log n)$ w.h.p.
We also investigate the communication cost occurred in this scheme for Uniform and Zipf popularity distributions. In particular, we derive the communication cost of $\Theta(\sqrt{K/M})$ for the Uniform distribution.


In contrast, we propose a new scheme which considers both maximum load and communication cost objectives simultaneously. For each request, this scheme chooses two random candidate servers that have cached the request while putting a constraint on their distance $r$ to the requesting node (\ie, the proximity constraint). Due to cache size limitation and the proximity constraint, current results in the balanced allocation literature cannot be carried over to our setting. Basically, we show that here the two chosen servers will become correlated and this might diminish the power of two choices. Since this correlation arises from both memory limitation and proximity principle, the main challenge we address in this paper is characterizing the regimes where we can benefit from the power of two choices and at the same time have a low communication cost.

In particular, 
	suppose $0<\alpha, \beta<1/2$ be two constants and
	let  $K=n$, $M=n^\alpha$,  $r=n^{\beta}$, and $\mathcal{P}$ be a Uniform distribution. Then, provided $\alpha+2\beta\geq 1+2(\log\log n/\log n)$, the maximum load is $\Theta(\log\log n)$ w.h.p., and the communication cost is $\Theta(r)$.
Therefore, we deduce that if we set  $M=n^\alpha$, for some constant $0<\alpha<1/2$, then it is sufficient to have $\beta={\frac{1-\alpha}{2}}+\log\log n/\log n$ and hence $r=n^{\frac{1-\alpha}{2}} \log n$. This means that the communication cost is only $\log n$ factor above the communication cost achieved by the nearest replica strategy, which is $\Theta(\sqrt{K/M})=\Theta({n^{\frac{1-\alpha}{2}}})$.

\subsection{Related Work}
Load balancing has been the focus of many papers on cache networks \cite{CDN_Peng_2004, RoussopoulosB06, LeconteLM12}, among which distributed approaches have attracted a lot of attention (\eg, see \cite{ManfrediOR13}, \cite{AdlerCMR98}, and \cite{XiaAYL15}). Randomized load balancing via the power of two choices, is a popular approach in this direction \cite{Mitzenmacher01}. Chen et al. \cite{ChenLPCCSA05} consider the two choices selection process, where the second choice is the next neighbor of the first choice.  In \cite{XiaDH07} Xia et al. use the length of common prefix (LCP)-based replication to arrive at a recursive balls and bins problem. In \cite{ChenLPCCSA05} and \cite{XiaDH07}, the authors benefit from the metaphor of power of two choices to design algorithms for randomized load balancing. In contrast, in this paper we follow a theoretical approach to derive provable results for cache networks with limited memory. 

In \cite{ShahV15} the authors consider the supermarket model for performance evaluation of CDNs. Although the work \cite{ShahV15} considers the memory limitation into account, it does not consider the proximity principle which is a central issue in our paper. 
Liu et al. \cite{LiuST16} study the setting where the clients compare the servers in terms of hit-rate (for web applications), or bit-rate (for video applications) to choose their favourite ones. Their setup and objectives are different from those we consider here. Moreover, they have not considered the effect of their randomized load balancing scheme on communication cost.

Additionally, the trade-off between proximity and load balancing in request routing has been considered in some works such as \cite{PathanVB08, TangTW14}, and \cite{StanojevicS09}. Although these works have mentioned this trade-off, non of them provides a rigorous analysis. To the best of our knowledge, our paper is the first work characterizing the above trade-off in an analytical framework.

From the theoretical viewpoint, in the standard balls and bins model, each ball (request) picks two bins (servers) independently and uniformly at random and it is then allocated to the one with lesser load \cite{ABKU99}. However, memory limitation and proximity principle in cache networks makes the bins choices correlated which  resembles the balls and bins model with \emph{related choices} (\eg, see \cite{BBFN12}, \cite{KP06}, \cite{God08}, and \cite{Pou16}). Our result also resides in this category, which is specific to cache networks with memory limitation and proximity constraint.

The organization of the paper is as follows. In Section~\ref{sec:Notation_ProblemSetting}, we present our notation and problem setup. Then, in Section \ref{sec:Nearest_Replica_Strategy} the \emph{nearest replica strategy}, as the baseline scheme, is presented and its maximum load and communication cost are investigated. In Section \ref{sec:Proximity-Aware_Two_Choices_Strategy}, we propose the \emph{proximity-aware two choices strategy}, which at the same time considers proximity of requests and servers, and benefits from the power of two choices. In order to do this, we first present some examples to shed light on different aspects of the problem. Then, we propose our main results in two different regimes, namely $M=n^{\alpha}$, for every constant $0<\alpha<1/2$, and $M=K$. In Section \ref{sec:Simulations} performance of these two schemes are investigated via extensive simulations. Finally, our discussions and future directions are presented in  \S\ref{sec:disscuss}.

\section{Notation and Problem Setting}\label{sec:Notation_ProblemSetting}

\subsection{Notation}
Throughout the paper, with high probability refers to an event that happens with probability $1-1/n^c$, for some constant $c>0$. Let $G=(V,E)$ be a graph with vertex set $V$ and edge set $E$ where $e(G):= |E|$. For $u\in V$ let $d(u)$ denote for the degree of $u$ in $G$. For every pair of nodes $u,v\in V$,   $d_G(u, v)$ denotes the length of a shortest path from $u$ to $v$ in $G$. The neighborhood of $u$ at distance $r$ is defined as    
\[
  B_r(u) := \left\{v : d_G(u, v)\le r ~ {\text {and}}~~ v\in V(G) \right\}.
\]
Finally, we use $\mathrm{Po}(\lambda)$ to denote for the Poisson distribution with parameter $\lambda$.

\subsection{Problem Setting}

We consider a cache network consisting of $n$ caching servers (also called cache-enabled nodes) and edges connecting neighboring servers forming a $\sqrt{n} \times \sqrt{n}$ grid. Thus, direct communication is possible only between adjacent nodes, and other communications should be carried out in a multi-hop fashion.

\begin{remark}
	Throughout the paper for the sake of presentation clarity we may consider a torus with $n$. This helps to avoid boundary effects of grid 
	and all the asymptotic results hold for the grid as well.
\end{remark}

Suppose that the cache network is responsible for handling a library of $K$ files $\mathcal{W}=\{W_1,\dots,W_K\}$, whereas the popularity profile follows a known distribution $\mathcal{P}=\{p_1,\dots,p_K\}$.

The network operates in two phases, namely, \emph{cache content placement} and \emph{content delivery}. In the cache content placement phase each node caches $M\le K$ files randomly from the library according to their popularity distribution $\mathcal{P}=\{p_1,\dots,p_K\}$ with replacement, independent of other nodes.
Also note that, throughout the paper we assume that $M\ll K$, unless otherwise stated.


Consider a time block during which $n$ files are requested from the servers sequentially. The server of each request is chosen uniformly at random  from $n$ servers. Let $D_i$ denote the number of requests (demands) arrived at server $i$. Then for large $n$ we have $D_i \sim \mathrm{Po(1)}$ for all $1\le i\le n$.


For library popularity profile $\mc{P}$, we consider two probability distributions, namely, Uniform and Zipf with parameter $\gamma$. In the  Uniform distribution we have
\begin{equation*}
p_i=\frac{1}{K},\quad i=1,\dots,K,
\end{equation*}
which considers equal popularity for all the files. In Zipf distribution the request probability of the $i$-th popular file is inversely proportional to its rank as follows
\begin{equation*}
p_i=\frac{1/i^\gamma}{\sum\limits_{j=1}^{K}1/j^\gamma},\quad i=1,\dots,K,
\end{equation*}
which has been confirmed to be the case in many practical applications \cite{Zipf1_99,Zipf2_07}.

For any given cache content placement, an assignment strategy determines how each request is mapped to a server. 
Let $T_i$ denote the number of requests assigned to server $i$ at the end of mapping process.

Now, for each strategy we define the following metrics.
\begin{definition}[Communication Cost and Maximum Load]
~
\begin{itemize}
\item The \emph{communication cost} of a strategy is the average number of hops between the requesting node and the serving node, denoted by $C$. 
\item The \emph{maximum load} of a strategy is the maximum number of requests assigned to a single node, denoted by $L=\max_{1\le i\le n} T_i$. 
\end{itemize}
\end{definition}


\section{Nearest Replica Strategy}\label{sec:Nearest_Replica_Strategy}
The simplest strategy for assigning requests to servers is to allocate each request to the nearest node that has cached the file. This strategy, formally defined below, leads to the minimum communication cost, while does not try to reduce maximum load.

\begin{definition} [Strategy I: Nearest Replica Strategy]
In this strategy each request is assigned to the nearest node --in the sense of the graph shortest path distance-- which has cached the requested file. If there are multiple choices ties are broken randomly.
\end{definition}

Consider the set of nodes that have cached file $W_j$, say $S_j$. According to Strategy I, each demand from node $u$ for file $W_j$ will be served by $\arg\min_{v \in S_j} d_G(u,v)$. This induces a Voronoi Tessellation on the torus corresponding to file $W_j$ which we denote by $\mathcal{V}_j$. Then, alternatively, we can define Strategy I as assigning each request of file $W_j$ to the corresponding Voronoi cell center. 

In order to analyze the maximum load imposed on each node, we should investigate the size of such Voronoi regions. The following Lemma is in this direction.
\begin{lemma}\label{WindMillLemma}
Under the Uniform popularity distribution,  the maximum cell size (number of nodes inside each cell) of $\mathcal{V}_j$, $1\le j\le K$, is at most $O\left(K \log n / M\right)$ w.h.p. In particular, every Voronoi cell centered at any node is contained in a sub-grid of size $r\times r$  with $r=O\left(\sqrt{K \log n / M}\right)$.
	 Furthermore, if $K=n^{1-\epsilon}$, for some  constant $0<\epsilon<1$, and $M=\Theta(1)$, then there exists a Voronoi cell of size $\Theta\left(K \log n / M\right)$ w.h.p.
	\end{lemma}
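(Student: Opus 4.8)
The plan is to control the Voronoi cells of $\mathcal{V}_j$ through the geometry of $\ell_1$-balls on the torus, using two facts: a ball $B_r(u)$ contains $\Theta(r^2)$ nodes, and each node independently caches $W_j$ with probability $p = 1-(1-1/K)^M = \Theta(M/K)$ (since $M\ll K$). The containment claim and the size bound are really the same statement expressed through the radius $r=\Theta(\sqrt{K\log n/M})$, so I would derive the radius first and read off the cell size as $\Theta(r^2)$.

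For the upper bound, I would fix $r = C\sqrt{(K/M)\log n}$ so that $|B_r(u)| = \Theta(r^2) = \Theta(C^2 (K/M)\log n)$, and bound the probability that \emph{no} node of $B_r(u)$ has cached $W_j$ by $(1-p)^{|B_r(u)|} \le e^{-p|B_r(u)|} = n^{-\Theta(C^2)}$. Choosing the constant $C$ large enough and taking a union bound over all $n$ centers and all $K\le n$ files shows that, w.h.p., every node has a replica of every file within distance $r$. Hence the nearest replica of $W_j$ to any node lies within distance $r$, so each cell of $\mathcal{V}_j$ is contained in the ball $B_r(c)$ about its center $c$; this yields the $O(K\log n/M)$ cell size and containment in a subgrid of side $O(r)$.

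For the lower bound I would reverse the argument and exhibit an \emph{isolated} replica. Set $r' = \sqrt{\theta (K/M)\log n}$ for a constant $\theta\in(0,\epsilon)$, and for each node $v$ let $A_v$ be the event that $v$ caches $W_j$ while no other node of $B_{2r'}(v)$ does. A triangle-inequality argument shows that on $A_v$ every $w\in B_{r'}(v)$ has $v$ as its unique nearest replica (any competing replica is at distance $>2r'$ from $v$, hence $>r'\ge d_G(w,v)$ from $w$), so $B_{r'}(v)$ lies inside the cell of $v$ and that cell has size at least $|B_{r'}(v)| = \Theta(r'^2) = \Theta(K\log n/M)$. Combined with the upper bound this pins the size at $\Theta(K\log n/M)$, so it suffices to prove that some $A_v$ occurs w.h.p.

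Let $X = \sum_v \1[A_v]$. Since $\Pr{A_v} = p(1-p)^{|B_{2r'}(v)|-1} = \Theta((M/K)\,n^{-\theta})$, we get $\Ex{X} = \Theta(n\,(M/K)\,n^{-\theta}) = \Theta(n^{\epsilon-\theta})$, a growing polynomial for $\theta<\epsilon$, and I would finish by the second moment method. The delicate point, and the main obstacle, is the variance, because candidate centers are highly dependent. Pairs with $0<d_G(v,v')\le 2r'$ give \emph{mutually exclusive} events (as $A_{v'}$ needs $v'$ to be a replica while $A_v$ forbids replicas in $B_{2r'}(v)\ni v'$), contributing nonpositive covariance; pairs at distance in $(2r',4r']$, of which there are $\Theta(r'^2)$ per center, have forbidden balls that differ on $\Theta(r'^2)$ nodes, so $\Pr{A_v\cap A_{v'}} \le p\,\Pr{A_v}/(1-p)$; and pairs beyond $4r'$ are independent. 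Summing gives $\Var{X} = O(r'^2 p)\,\Ex{X} = O(\log n)\,\Ex{X}$, the crucial cancellation being $r'^2 p = \Theta(\log n)$. Thus $\Pr{X=0} \le \Var{X}/\Ex{X}^2 = O(\log n/\Ex{X}) = O(n^{-(\epsilon-\theta)}\log n)$, which is polynomially small. Note that the naive route of restricting to a well-separated (hence independent) family of centers loses a factor $r'^2$ in the first moment and fails for $\epsilon<1/2$; exploiting the near-disjointness of the forbidden balls in the covariance bound is what makes the argument go through.
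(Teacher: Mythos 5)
Your proof is correct, and its two halves relate to the paper differently. The lower bound is essentially the paper's own argument: an isolated replica (a center caching $W_j$ with an empty exclusion ball) whose surrounding ball of half the exclusion radius lies inside its cell, counted by a second moment over \emph{all} $n$ candidate centers, with exactly the same three-range covariance split the paper uses (coincident pairs; pairs inside the exclusion radius, where the events are mutually exclusive and the covariance is nonpositive; partially overlapping exclusion balls, bounded by an extra factor $p$ per pair over $\Theta(r'^2)$ pairs; independence beyond twice the exclusion radius), finishing with Chebyshev. Your closing remark — that restricting to a well-separated independent family loses a factor $r'^2$ in the first moment and fails for $\epsilon\le 1/2$ — is precisely why the paper also sums over all centers. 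Where you genuinely diverge is the upper bound: the paper shows that w.h.p.\ each of four windmill-shaped wedges $A_1(u),\dots,A_4(u)$ around every node contains a replica of every file, and then traps the Voronoi cell inside the axis-aligned box spanned by the four nearest wedge replicas; you instead observe that w.h.p.\ every node has a replica of every file within distance $r=\Theta(\sqrt{(K/M)\log n})$, so any node in the cell of center $c$ is within distance $r$ of $c$ and the whole cell sits inside $B_r(c)$. Your route is shorter and dispenses with the wedge geometry entirely, and asymptotically it gives the same $O(r)\times O(r)$ subgrid containment, so nothing is lost; the paper's construction merely tailors the bounding box to the actual nearest replicas rather than the worst case. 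Two bookkeeping points, both harmless: the union bound over files needs $K\le\mathrm{poly}(n)$, which holds in every regime where the lemma is invoked; and since $|B_{2r'}(v)|\approx 8r'^2$ on the $\ell_1$ torus, your choice $r'^2=\theta(K/M)\log n$ actually gives $\Pr{A_v}=\Theta\bigl((M/K)\,n^{-c\theta}\bigr)$ for a fixed constant $c\approx 8$ rather than exponent $\theta$ itself, so you should take $\theta$ a sufficiently small constant (say $\theta<\epsilon/8$), after which the first- and second-moment computations go through verbatim.
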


\begin{proof}
Refer to Appendix~\ref{apndx:proofs}.
\end{proof}

Now, we are ready to present our main results for this section which characterize the maximum load of Strategy I, in Theorems~\ref{thm:imblance}~and~\ref{thm:Strategy_I_Large}.

\begin{thm}\label{thm:imblance}
Suppose that $K=n^{1-\epsilon}$, for some constant $0< \epsilon <1$, and $M=\Theta(1)$. Then, under Uniform distribution $\mathcal{P}$, Strategy I achieves maximum load of $L=\Theta(\log n)$ w.h.p.
\end{thm}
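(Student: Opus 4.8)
The plan is to prove matching $O(\log n)$ and $\Omega(\log n)$ bounds, in each case writing the load at a server as a sum of independent Poisson variables of total mean $\Theta(\log n)$ and then applying Poisson tail bounds together with a union bound. The structural fact I would exploit throughout is that the Voronoi tessellations $\mathcal{V}_j$ are functions of the cache placement \emph{alone}, whereas the request process is generated independently of the placement. Hence, after conditioning on the placement, every cell is a fixed set of nodes and the request counts inside it are genuinely Poisson and independent across files and nodes; this is exactly what lets both the upper and the lower estimate go through cleanly.

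For the upper bound I would first condition on the event $\mathcal{E}$ that every cell obeys the size bound of Lemma~\ref{WindMillLemma}, i.e.\ $|\mathcal{V}_j(i)| = O(K\log n / M)$ for all $j,i$; this holds w.h.p. Fix a node $i$ and let $j_1,\dots,j_M$ be the (at most $M=\Theta(1)$) files it caches. Its load $T_i$ is precisely the number of requests for some $j_t$ whose originating node lies in the corresponding cell $\mathcal{V}_{j_t}(i)$. Working in the Poissonized model (each node emits $\mathrm{Po}(1)$ requests, each labelled by an i.i.d.\ file from $\mathcal{P}$), Poisson thinning shows the number of requests for $j_t$ arising in $\mathcal{V}_{j_t}(i)$ is $\mathrm{Po}\!\left(|\mathcal{V}_{j_t}(i)| / K\right)$, independently over $t$, and on $\mathcal{E}$ each such mean is $O(\log n / M)$. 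Summing the $M$ contributions, $T_i$ is stochastically dominated by $\mathrm{Po}(c\log n)$ for a constant $c$. A Poisson upper-tail bound then gives $\Pr{T_i > a\log n}\le n^{-3}$ for a suitably large constant $a$, and a union bound over the $n$ servers yields $L = O(\log n)$ w.h.p.

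For the lower bound I would invoke the ``furthermore'' part of Lemma~\ref{WindMillLemma}, which guarantees that w.h.p.\ there is a cell $\mathcal{V}_j(i)$ of size $\Theta(K\log n / M)$. Since this cell, its center $i$, and its file $j$ are determined by the placement alone, I can condition on them and then expose the independent request process: the load on $i$ is at least the number of requests for $j$ originating in $\mathcal{V}_j(i)$, which is $\mathrm{Po}(\Theta(\log n / M)) = \mathrm{Po}(\Theta(\log n))$ because $M = \Theta(1)$. A Poisson lower-tail (concentration) bound shows this exceeds $c'\log n$ for a small constant $c'$ with probability $1 - n^{-\Omega(1)}$, so w.h.p.\ $L = \Omega(\log n)$. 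Combining the two directions gives $L = \Theta(\log n)$ w.h.p.

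The step I expect to be the main obstacle is the careful bookkeeping of conditioning, rather than any single hard estimate: one must be explicit that conditioning on the (high-probability) placement event $\mathcal{E}$ freezes the cells as deterministic sets so that the request counts remain exactly Poisson with the asserted means, justifying both the stochastic domination above and the single-cell concentration below. A secondary, routine point is that $D_i \sim \mathrm{Po}(1)$ is the standard Poissonization of the multinomial ``$n$ requests into $n$ servers'' model; I would either argue directly in the Poissonized model or transfer the tail bounds back by a standard de-Poissonization argument, noting that the total request count concentrates at $n$.
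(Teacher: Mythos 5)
Your proposal is correct and follows essentially the same route as the paper's proof: both directions rest on Lemma~\ref{WindMillLemma} (the uniform $O(K\log n/M)$ cell-size bound for the upper bound, the ``furthermore'' part for the lower bound), compute a mean load of $\Theta(\log n)$ per server, and finish with Chernoff-type concentration plus a union bound. The only cosmetic difference is that you argue in the Poissonized model with thinning and de-Poissonize at the end, while the paper bounds the per-request hit probability by $O(\log n/n)$ directly in the $n$-request model and applies a Chernoff bound to the resulting Bernoulli sum.
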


\begin{proof}
	Consider node $u$ which has cached a set of distinct files, say $S$, with $|S|\le M$. Applying Lemma \ref{WindMillLemma} shows that all  Voronoi cells centered at $u$ corresponding to cached files at $u$ are contained in a sub-grid of size  at most $O(K \log n /M)$  w.h.p. Also in each round, every arbitrary node  requests for a file in $S$ with probability  $|S|/nK\le M/nK$, as each request  randomly chooses its origin and type. Hence, by union bound, a node in the sub-grid may request for a file in $S$  with probability at most $O(K\log n/M)\cdot (M/nK)=O(\log n/n)$. Since there are $n$ requests, the expected number of requests  imposed to node $u$ is $O(\log n)$. Now using a Chernoff bound (e.g., see Appendix \ref{app:bounds}) shows that w.h.p. $u$ has to handle   at most $O(\log n)$ requests.

On the other hand, to establish a lower bound on the maximum load we proceed as follows. Lemma \ref{WindMillLemma} shows that there exits a Voronoi cell in $\mathcal{V}_j$, for some $j$, such that the center node should handle the requests of at least $\Theta(K \log n /M)$ nodes w.h.p. Also each node in the cell may  request for file $W_j$ with probability $1/nK$.  So on average there are $\Theta(\log n/M)$ requests imposed on the cell center. Similarly, by  a Chernoff bound, one can see that this node experiences the load  $\Theta(\log n /M)$, which concludes the proof for constant $M$.
\end{proof}

\begin{remark}
It should be noted that the same result of $\Theta(\log n)$ for the maximum load can also be proved for the Zipf distribution. That is because the content placement distribution is chosen proportional to the file popularity distribution $\mathcal{P}$, and consequently this result is insensitive to $\mathcal{P}$. However, the proof involves lengthy technical discussions which we omit in this paper.
\end{remark}

\begin{thm}\label{thm:Strategy_I_Large}
	Suppose that $K=n$ and $M=n^{\alpha}$, for some $0< \alpha<1/2$. Then, under the Uniform distribution, the maximum load is in the interval $[\Omega(\log n/\log\log n), O(\log n)]$ w.h.p. 
\end{thm}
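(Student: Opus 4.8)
The plan is to bound the two endpoints of the interval separately, conditioning throughout on the (w.h.p.) event that the cell-size estimate of Lemma~\ref{WindMillLemma} holds, after which the Voronoi cells are fixed and the requests are independent of the placement. The upper bound $L=O(\log n)$ is essentially the argument of Theorem~\ref{thm:imblance} specialized to $K=n$, $M=n^{\alpha}$. For a fixed server $u$ with distinct cached set $S_u$, $|S_u|\le M$, Lemma~\ref{WindMillLemma} places all cells of $u$ inside one sub-grid $B$ around $u$ with $|B|=O(K\log n/M)=O(n^{1-\alpha}\log n)$ nodes. A request is charged to $u$ only if its uniform origin lies in $B$ and its uniform type lies in $S_u$, so its contribution probability is at most $\tfrac{|B|}{n}\cdot\tfrac{|S_u|}{K}=O(\log n/n)$; summing over the $n$ independent requests gives expected load $O(\log n)$, a Chernoff bound gives $L(u)=O(\log n)$ with probability $1-n^{-c}$, and a union bound over the $n$ servers completes this direction.

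The lower bound is the substantive part, and the main obstacle is that for $M=n^{\alpha}$ one cannot exhibit a single overloaded cell as in Theorem~\ref{thm:imblance}: every file now has $\Theta(n^{\alpha})$ replicas, so a typical cell has only $\Theta(n^{1-\alpha})$ nodes and contributes sub-constant expected load, and the load of a server is smeared over its $\Theta(n^{\alpha})$ cached files. I would therefore recast the process as non-uniform balls-into-bins: given the placement, each of the $n$ i.i.d.\ requests lands on server $v$ independently with probability $p_v=\tfrac{1}{Kn}\sum_{j\in S_v}c_{v,j}$, where $c_{v,j}$ is the number of nodes in the cell of $v$ in $\mathcal{V}_j$, so the load vector $(T_v)$ is multinomial with parameters $n$ and $(p_v)$.

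The crucial structural step is to produce \emph{many} servers with landing probability $\Omega(1/n)$. Writing $W_v=\sum_{j\in S_v}c_{v,j}$, the partition property of Voronoi cells gives $\sum_v W_v=\sum_{j=1}^{K} n = Kn = n^2$, so the $W_v$ average to $n$; meanwhile Lemma~\ref{WindMillLemma} bounds every cell by $O(n^{1-\alpha}\log n)$, whence $W_v\le M\cdot O(n^{1-\alpha}\log n)=O(n\log n)$ for all $v$. A reverse-Markov estimate then forces at least $\Omega(n/\log n)$ servers to satisfy $W_v\ge n/2$, i.e.\ $p_v\ge\tfrac{1}{2n}$; call this set $B$, and note $|B|=n^{1-o(1)}$, so $\log|B|/\log\log|B|=(1-o(1))\log n/\log\log n$.

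Finally I would run the standard occupancy tail argument on $B$. Fix $k=\lfloor(1-\epsilon)\log n/\log\log n\rfloor$. For $v\in B$, monotonicity of the binomial tail and $\binom{n}{k}\ge(n/k)^k$ give $q_v:=\Pr{T_v\ge k}\ge\Pr{\mathrm{Bin}(n,\tfrac{1}{2n})=k}\ge e^{-1}(2k)^{-k}=n^{-(1-\epsilon)-o(1)}$, so $\sum_{v\in B}q_v\ge|B|\cdot e^{-1}(2k)^{-k}=n^{\epsilon-o(1)}\to\infty$. Since multinomial occupancy counts are negatively associated, $\Pr{\max_{v\in B}T_v<k}\le\prod_{v\in B}(1-q_v)\le\exp\!\big(-\sum_{v\in B}q_v\big)\to 0$, so w.h.p.\ some server carries at least $k=\Omega(\log n/\log\log n)$ requests (a second-moment/Poissonization argument would serve equally in place of negative association). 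Intersecting with the $O(\log n)$ upper bound yields the stated interval w.h.p. The real difficulty here is not any single estimate but the reduction itself: recognizing that the quantity to control is the empirical distribution of the $p_v$'s, and that merely $\Omega(n/\log n)$ balanced bins already recover the full $\log n/\log\log n$ lower bound.
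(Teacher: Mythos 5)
Your proposal is correct, and while your upper bound coincides with the paper's (both reuse the first half of the proof of Theorem~\ref{thm:imblance}), your lower bound takes a genuinely different route. The paper proves the stronger pointwise statement that \emph{every} server attracts each request with probability $\Omega(1/n)$: it invokes the goodness property (Lemma~\ref{lem:GoodnessProperty}) to ensure each server $u$ holds $\Theta(M)$ distinct files, then defines indicators $X_{u,j}$ that no other replica of $W_j$ lies in $B_r(u)$ with $r=\sqrt{K/2M}$, controls their correlations with the moderate-dependence Chernoff bound (Lemma~\ref{mod-chernoff}) to conclude that a constant fraction of $u$'s files have $u$ as the unique replica within distance $r$ (so every request for such a file born in $B_{r/2}(u)$ is forced onto $u$), and finally dominates each load from below by $\mathrm{Po}(c)$ and takes the maximum over $n$ servers. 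You replace all of this placement-side probability with a deterministic averaging step: the Voronoi mass identity $\sum_v W_v = Kn$ combined with only the \emph{upper} half of Lemma~\ref{WindMillLemma} (giving $W_v = O(n\log n)$) forces, by reverse Markov, $\Omega(n/\log n)$ servers with $p_v\ge 1/(2n)$, which suffices since $\log(n/\log n)/\log\log(n/\log n)=(1-o(1))\log n/\log\log n$. What each approach buys: the paper's argument yields the structurally stronger fact that \emph{all} bins are balanced, but pays for it with the goodness lemma and the correlation machinery; yours is more elementary and self-contained, sidesteps the dependence analysis entirely, and your negative-association finish for the multinomial occupancy vector is actually tighter than the paper's somewhat informal ``dominate by $\mathrm{Po}(c)$ and take the max of $n$'' step, where independence across servers is not literally available. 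Two minor points you should make explicit: the mass identity needs every file to be cached at least once (this holds with probability $1-ne^{-n^{\alpha}(1+o(1))}$, so it is harmless), and random tie-breaking should be folded into the definition of $c_{v,j}$ as fractional cell masses so that $\sum_{v} c_{v,j}=n$ holds exactly; neither affects the argument.
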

\begin{proof}
	Refer to Appendix~\ref{apndx:proofs}.
\end{proof}

Next, we investigate the communication cost of Strategy~I in the following  theorem.

\begin{thm}\label{thm:Strategy_I_CommCost}
	Under the Uniform popularity distribution, Strategy I achieves the communication cost $C=\Theta(\sqrt{K/M})$, for every $M\ll K$. Under Zipf popularity distribution with $M=\Theta(1)$, it achieves
	\begin{equation}
		C = \left\{
		\begin{array}{llll}
		\Theta\left(\sqrt{K/M}\right) &: &\quad 0 < \gamma <1, \\
		\Theta\left(\sqrt{K/M \log K}\right) &: &\quad  \gamma=1, \\
		\Theta\left(K^{1-\gamma/2}/\sqrt{M}\right) &: &\quad  1< \gamma <2, \\ 
		\Theta\left(\log K /\sqrt{M}\right) &: &\quad \gamma=2, \\
		\Theta\left(1/\sqrt{M}\right) &: &\quad \gamma>2.
		\end{array}
		\right.	
		\end{equation} 
\end{thm}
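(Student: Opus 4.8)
The plan is to reduce the communication cost to a weighted sum, over files, of a single per-file quantity --- the expected hop-distance from a random node to the nearest replica of that file --- and then to evaluate this sum for each popularity profile. Write $u_i$ for the (uniform) origin and $j_i\sim\mc{P}$ for the file of the $i$-th request, and $S_j$ for the random set of nodes that have cached $W_j$, and set $d_G(u,S_j):=\min_{v\in S_j}d_G(u,v)$. Since Strategy~I routes each request to the nearest replica, $C=\frac1n\sum_{i=1}^n d_G(u_i,S_{j_i})$, so by linearity $\Ex{C}=\sum_{j=1}^K p_j\,d_j$, where $d_j:=\Ex{d_G(u,S_j)}$. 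The theorem follows once I (i) pin down $d_j$ as a function of the per-node caching probability $q_j:=1-(1-p_j)^M\asymp\min\{1,Mp_j\}$, and (ii) carry out the resulting sum.

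For step (i) the key point is that, by the placement model, the events $\{v\in S_j\}$ are independent across $v$ with probability $q_j$ each, so for a fixed origin $u$ one has $\Pr{d_G(u,S_j)>t}=(1-q_j)^{|B_t(u)|}$. On the torus $|B_t(u)|=\Theta(t^2)$ for $t\le\sqrt n/2$, whence $\Pr{d_G(u,S_j)>t}=\exp(-\Theta(q_j t^2))$, a discrete Gaussian tail with scale $t^\star_j=\Theta(1/\sqrt{q_j})$. Summing the tail gives $d_j=\sum_{t\ge0}\Pr{d_G(u,S_j)>t}=\Theta(1/\sqrt{q_j})$, the matching lower bound coming from $\Pr{d_G(u,S_j)>t}=\Omega(1)$ for all $t\le t^\star_j$. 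For the Uniform profile $q_j=\Theta(M/K)$ for every $j$, so $d_j=\Theta(\sqrt{K/M})$ uniformly and $\Ex{C}=\Theta(\sqrt{K/M})$ for every $M\ll K$. Note that Lemma~\ref{WindMillLemma} only bounds the worst-case cell radius by $O(\sqrt{K\log n/M})$, i.e.\ the maximum rather than the average distance, so the sharper tail estimate above is exactly what removes the stray $\log n$.

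For step (ii) under Zipf, substituting $q_j\asymp\min\{1,Mp_j\}$ into $d_j\asymp1/\sqrt{q_j}$ gives $p_j d_j\asymp\sqrt{p_j/M}$ for the ``sparse'' files (those with $Mp_j<1$) and $p_j d_j\asymp p_j$ for the $O(1)$ ``dense'' files (those with $Mp_j\ge1$, which exist only when $\gamma>1$). As $M=\Theta(1)$, the dense files contribute only $O(1)=O(1/\sqrt M)$ in total, so the estimate is dominated by
\[
\Ex{C}\;\asymp\;\frac{1}{\sqrt M}\sum_{j=1}^K\sqrt{p_j}\;=\;\frac{1}{\sqrt M\,\sqrt H}\sum_{j=1}^K j^{-\gamma/2},\qquad H:=\sum_{j=1}^K j^{-\gamma}.
\]
Plugging the standard asymptotics of the generalized harmonic sums $\sum_{j\le K}j^{-s}$ (which is $\Theta(K^{1-s})$ for $s<1$, $\Theta(\log K)$ for $s=1$, and $\Theta(1)$ for $s>1$) in for both $s=\gamma/2$ and $s=\gamma$ reproduces the five cases: for instance $0<\gamma<1$ yields $K^{1-\gamma/2}/K^{(1-\gamma)/2}=\sqrt K$ and hence $\Theta(\sqrt{K/M})$, and the remaining ranges $\gamma=1,\,1<\gamma<2,\,\gamma=2,\,\gamma>2$ follow identically.

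The main obstacle I expect is twofold, and both parts live at the extremes of the popularity range. First, the estimate $d_j=\Theta(1/\sqrt{q_j})$ breaks down for the rarest files, where $q_j n=O(1)$ and $S_j$ may be empty or a single point, making $d_G(u,S_j)$ of order $\sqrt n$ or undefined; I would control their total weighted contribution with a crude $O(\sqrt n)$ per-request bound combined with the fact that $\sum_{j:\,q_j n=O(1)}p_j$ is negligible, checking it stays below the claimed order in every case. Second, promoting $\Ex{C}$ to a statement about the realized $C$ needs concentration of an average of $n$ correlated distances (the correlation entering through the shared placement $\{S_j\}$); I would use the tail bound above to show each $d_G(u_i,S_{j_i})$ is $O(\sqrt{\log n/q_{j_i}})$ w.h.p.\ and then apply a bounded-differences/Chernoff argument over the $n$ requests. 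The harmonic-sum evaluation and the torus ball-size estimate are routine by comparison.
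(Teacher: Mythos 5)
Your proposal follows essentially the same route as the paper's proof: both reduce the cost to $\sum_{j=1}^K p_j\,\Theta\bigl(1/\sqrt{q_j}\bigr)$ with $q_j=1-(1-p_j)^M$, use $q_j=\Theta(Mp_j)$ when $M=\Theta(1)$, and evaluate the resulting generalized harmonic sums via the standard asymptotics of $\Lambda(\gamma)=\sum_{j=1}^K j^{-\gamma}$. Your tail-sum derivation of $d_j=\Theta(1/\sqrt{q_j})$, along with the explicit handling of dense/rare files and the concentration step, is a more careful rendering of the paper's heuristic that a geometric number $\Theta(1/q_j)$ of probed nodes translates into hop-distance $\Theta(1/\sqrt{q_j})$ on the torus, so it is correct and, if anything, tighter than the published argument.
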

\begin{proof}
Refer to Appendix~\ref{apndx:proofs}.
\end{proof}

Theorem \ref{thm:Strategy_I_CommCost}  shows how non-uniform file popularity reduces communication cost. The skew in file popularity is determined by the parameter $\gamma$ which will affect the communication cost. For example, for $\gamma<1$ communication cost is similar to the Uniform distribution, while for $\gamma > 2$, it becomes independent of $K$. 

Since in Strategy I we have assigned each request to the nearest replica, Theorem~\ref{thm:Strategy_I_CommCost} characterizes  the minimum communication cost one can achieve. 
However, Theorems~\ref{thm:imblance}~and~\ref{thm:Strategy_I_Large} show a logarithmic growth for the maximum load as a function of network size $n$. 
This imbalance in the network load is because in Strategy I each request assignment does not consider the current load of servers. A natural question is whether, at each request allocation, one can use a very limited information of servers' current load in order to reduce the maximum load. Also one can ask how does this affect the communication cost.

\section{Proximity-Aware Two Choices Strategy}\label{sec:Proximity-Aware_Two_Choices_Strategy}
Strategy I introduced in the last section will result in the minimum communication cost, while, the maximum load for that strategy is of order $\Omega\left(\log n/\log\log n\right)$. In this section we investigate an strategy which will result in an exponential decrease in the maximum load, \ie, reduces maximum load to $\Theta\left(\log \log n \right)$, formally defined as follows.

\begin{definition}[Proximity-Aware Two Choices Strategy]
For each request born at an arbitrary node $u$ consider two uniformly at random chosen nodes from $B_r(u)$, that have cached the requested file. Then, the request is assigned to the node with lesser load. Ties are broken randomly.
\end{definition}

For the sake of illustration, first, we consider some examples in the following.

\begin{example}[$M=K$ and $r=\infty$\footnote{It should be noted that $r\ge \sqrt{n}$ (including $r=\infty$) is equivalent to $r= \sqrt{n}$. Thus in this paper we use $r=\sqrt{n}$ and $r=\infty$ alternatively.}]\label{ex:PO2C_M=K_r=infty}
In this example each node can store all the library and there is no constraint on proximity. As mentioned in \S~\ref{sec:Introduction}, the number of files that should be handled by each node (\ie,  $D_i$ for $i=1,\dots,n$) will be a $\mathrm{Po}(1)$ random variable.  In this case, according to Strategy~II, two random nodes are chosen from all network nodes and the request is assigned to the node with lesser load.

Therefore, in terms of maximum load, this  problem is reduced to the standard power of two choices  model in the balanced allocations literature \cite{ABKU99}. In this model there are $n$ bins and $n$ sequential balls which are randomly allocated to bins. In every round each ball picks two random bins uniformly, and it is then allocated to the bin with lesser load \cite{ABKU99}. Then it is shown that the maximum load of network is $L=\max_i T_i=\log \log n(1+o(1))$ w.h.p. \cite{ABKU99}, which is an exponential improvement compared to Strategy I. 
\end{example}
  
However, in contrast to Example~\ref{ex:PO2C_M=K_r=infty}, in cache networks usually each node can store only a subset of files, and this makes the problem different from the standard balls and bins model, considered in \cite{ABKU99}. Here, due to the memory constraint at each node, the choices are much more limited than the $M=K$ case. In other words here we have the case of \emph{related choices}.
In the related choices scenario, the event of choosing the second choice is correlated with the first choice; this correlation may annihilate the effect of power of two choices as demonstrated in Example~\ref{ex:PO2C_K=n_M=cte_r=infty}.

\begin{example}[$K=n$, $M=\Theta(1)$, and $r=\infty$]\label{ex:PO2C_K=n_M=cte_r=infty}
In this regime,  there is a subset of the library, say $S$ with $|S|=\Theta(n)$, whose files are  replicated  in $[1,M]$ number of places. On the other hand, each file type is requested $\mathrm{Po}(1)$ times and hence w.h.p. there will be a file in $S$ which is requested $\Theta(\log n/\log\log n)$ times (e.g., see \cite{Devroye85}). Since each file in $S$ is replicated at most $M$ times, requests for the file are distributed among at most $M$ nodes and thus
  the maximum load of the corresponding nodes will be at least $\Theta(\log n/\log\log n)/M$. Hence, due to memory limitation we cannot benefit from the power of two choices.
\end{example}

Although Example~\ref{ex:PO2C_K=n_M=cte_r=infty} shows that memory limitation can annihilate the power of two choices but this is not always the case. Example~\ref{ex:prop_M1} shows even for $M=1$ for some scenarios we can achieve $L=O(\log\log n)$.

 
\begin{example}[$K=n^{1-\epsilon}$ for every constant $0<\epsilon<1$, $M=1$, and $r=\infty$] \label{ex:prop_M1}
For any popularity distribution $\mc{P}$ where $\sum_{j=1}^{K}{(p_jn)^{-c}}=o(1)$, Strategy~II achieves maximum load $L=O(\log \log n)$ w.h.p. Also, notice that Uniform and Zipf distributions   satisfy this requirement,
whenever $\epsilon\in \left( \frac{\gamma-1}{\gamma}, 1 \right)$ for $\gamma\ge 1$, where $\gamma$ is  Zipf parameter.

Roughly speaking, when $M=1$, we may partition the servers based on their cached file and hence we have $K$ ``disjoint'' subsets of servers. Similarly there are $K$ request types where each request should be addressed by the corresponding subset of servers. Thus, here we have $K$ disjoint Balls and Bins sub-problems, and the sub-problem with maximum load determines the maximum load of the original setup. 
The reason that here, in contrast to Example~\ref{ex:PO2C_K=n_M=cte_r=infty}, we can benefit from power of two choices is the assumption of $K\ll n$.

For a formal proof of above claim, refer to Appendix \ref{apndx:Examp-M1}.
\end{example}

Above examples bring to attention the following question.

\begin{question}\label{que:MemoryLimitRegimes}
In view of the memory limitation at each server in cache networks, what are the regimes (in terms of problem parameters) one can benefit from the power of two choices to balance out the load?
\end{question}

Addressing Question~\ref{que:MemoryLimitRegimes}, for the general $M>1$ case, is more challenging than Example~\ref{ex:prop_M1} and needs a completely different approach.
The simplicity of case $M=1$ is that there is no interaction between $K$ Balls and Bins sub-problems. On the other hand, consider $M>1$. If a request, say $W_j$, should be allocated to a server then the load of two candidate bins that have cached $W_j$ should be compared. However, load of other file types will also be accounted for in this comparison. So there is flow of load information between different sub-problems which makes them entangled. 

In all above examples, we have not considered the proximity constraint, \ie, $r=\infty$, yet. This results in a fairly high communication cost $C=\Theta\left(\sqrt{n}\right)$. However, in general since parameter $r$ controls the communication cost, it can be chosen to be much less than the network diameter, \ie, $\Theta(\sqrt{n})$. This proximity awareness introduces another source of correlation (other than memory limitation) between the two choices. Thus, considering the proximity constraint may annihilate the power of two choices even in large memory cases as demonstrated in the following example.


\begin{example}[$M=K$ and $r=1$]
In this example, when a request arrives at a server, the server chooses two random choices among itself and its neighbours. Then the request is allocated to the one with lesser load. Since there exists a server at which $\max_i D_i=\Theta(\log n/\log\log n)$ requests arrive, maximum load of network (\ie, $L=\max_i T_i$) will be at least $\Theta(\log n/\log\log n)/5$.
\end{example}

Thus, similar to Question~\ref{que:MemoryLimitRegimes} regarding the memory limitation effect, one can pose the following question regarding proximity principle.

\begin{question}\label{que:ProximityConstraint}
In view of the proximity constraint of Scheme~II, what are the regimes (in terms of problem parameters) one can benefit from the power of two choices to balance out the load?
\end{question}

In order to completely analyze load balancing performance of Scheme~II, one should consider both sources of correlation simultaneously (which is not the case in above examples). To this end, in the following, we investigate two memory regimes, namely $M=K$ and $M=n^{\alpha}$, for some $0<\alpha<1/2$.

Our main result for $M=n^{\alpha}$ is presented in the following theorem.


\begin{thm}\label{main:twochoice}
Suppose that $0< \alpha, \beta<1/2$ be two constants and let $K=n$, $M=n^{\alpha}$, and $r=n^{\beta}$. Then if 
\[
\alpha+2\beta\ge 1+2\log\log n /\log n,
\]
under the Uniform popularity distribution,  Strategy II achieves maximum load $L=\Theta(\log \log n)$ and communication cost $C=\Theta(r)$ w.h.p.
\end{thm}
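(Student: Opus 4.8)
The plan is to condition on a favourable cache placement, dispatch the communication-cost claim with a short geometric computation, and then attack the maximum-load claim by a layered-induction (equivalently, witness-tree) argument in the spirit of \cite{ABKU99,KP06}. Throughout, write $\mu := n^{\alpha+2\beta-1}$ for the expected number of candidate servers of a single request, and observe that since $n^{2\log\log n/\log n}=(\log n)^2$, the hypothesis $\alpha+2\beta\ge 1+2\log\log n/\log n$ is exactly the statement $\mu\ge(\log n)^2$. The first step is to pin down the geometry of the candidate sets: for a node $u$ and file $W_j$ let $S_{u,j}:=\{v\in B_r(u): v \text{ has cached } W_j\}$. Since $|B_r(u)|=\Theta(r^2)=\Theta(n^{2\beta})$ and each node independently caches $W_j$ with probability $1-(1-1/K)^M=\Theta(M/K)=\Theta(n^{\alpha-1})$, the size $|S_{u,j}|$ is a sum of independent indicators with mean $\Theta(\mu)$. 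A Chernoff bound gives a deviation probability at most $e^{-\Omega(\mu)}\le e^{-\Omega((\log n)^2)}$, which survives a union bound over all $nK=n^2$ pairs $(u,j)$. Hence w.h.p.\ every $S_{u,j}$ has size $\Theta(\mu)\ge(\log n)^2$; in particular the strategy is always well defined (at least two candidates) and every request is served within distance $r$. I condition on this event henceforth.

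The communication cost is then the easy part. The bound $C=O(r)$ is immediate because every assignment lands in $B_r(u)$. For $C=\Omega(r)$, note that by the symmetry of the i.i.d.\ placement a uniformly chosen element of $S_{u,j}$ is marginally uniform over $B_r(u)$, and a uniform point of the radius-$r$ ball lies at distance at least $r/2$ from $u$ with some constant probability $p_0>0$. Each of the two candidates independently has this property, so with probability at least $p_0^2$ both candidates, hence the chosen server, lie at distance $\ge r/2$; averaging over requests yields $C=\Omega(r)$. Crucially the selection rule compares loads, not distances, so it does not bias the chosen server toward $u$, which is what makes this lower bound robust.

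The heart of the proof is the upper bound $L=O(\log\log n)$. I would run a layered induction: letting $n_{\ge i}$ be the number of servers of load $\ge i$ and $f_i:=n_{\ge i}/n$ the heavy fraction, a server is pushed from load $i$ to $i+1$ only by a request whose \emph{both} candidates already have load $\ge i$, so $n_{\ge i+1}$ is at most the number of such requests. For a request at $(u,j)$ this event has probability $\approx (h_{u,j}/|S_{u,j}|)^2$, where $h_{u,j}$ is the number of load-$\ge i$ servers inside $S_{u,j}$. If $h_{u,j}/|S_{u,j}|$ is uniformly comparable to $f_i$, summing over the $\Theta(n)$ requests gives the familiar recursion $f_{i+1}\lesssim f_i^2+\text{(lower order)}$, whose doubly-exponential decay terminates at level $O(\log\log n)$; the error term is negligible precisely because $|S_{u,j}|\ge(\log n)^2$, so the $O(1/\mu)$ additive slack per request is killed after union bounding over the $O(\log\log n)$ levels and $n$ servers. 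The matching lower bound $L=\Omega(\log\log n)$ follows from running the same recursion from below (the heavy fractions also satisfy $f_{i+1}\gtrsim f_i^2$), consistent with the fact that the $M=K,\ r=\infty$ specialisation in Example~\ref{ex:PO2C_M=K_r=infty} is exactly the standard two-choice process.

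The step I expect to be the main obstacle is exactly the uniform local-density estimate, namely $h_{u,j}=O(f_i\,|S_{u,j}|)+o(f_i\,|S_{u,j}|)$ for \emph{every} pair $(u,j)$ simultaneously. Unlike the unrestricted model, here the heavy set is produced adaptively by the process and is correlated both with the placement (which fixes the $S_{u,j}$) and with earlier choices, while the sets $S_{u,j}$ for nearby $u$ or for a common $j$ overlap heavily; a priori the heavy servers could cluster inside some local set and locally annihilate the power of two choices. Overcoming this is where the hypothesis is spent: with $|S_{u,j}|\ge(\log n)^2$ each local set is large enough for its heavy-fraction to concentrate around the global value $f_i$ (via a union bound over levels together with the independence of the choice randomness from the placement), mirroring the dense-graph regime of Kenthapadi and Panigrahi \cite{KP06}, where the analogous density threshold $n^{\Omega(\log\log n/\log n)}=\mathrm{polylog}(n)$ appears. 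Equivalently one may phrase the entire argument as a witness-tree bound, where the same largeness of $\mu$ makes a height-$\Theta(\log\log n)$ witness tree too improbable to exist for any server.
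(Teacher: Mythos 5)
Your reduction of the hypothesis to $\mu\ge(\log n)^2$, the concentration of $|S_{u,j}|$ with a union bound over the $n^2$ pairs $(u,j)$, and the communication-cost argument are all sound (indeed more explicit than the paper's own one-line treatment of $C=\Theta(r)$). The genuine gap is at exactly the step you flag as the main obstacle, and the fix you propose for it does not work. The uniform local-density estimate $h_{u,j}=O(f_i\,|S_{u,j}|)$ is not a concentration statement that follows from $|S_{u,j}|\ge(\log n)^2$ together with a union bound over levels: the heavy set at level $i$ is a single realization, adaptively generated by the very choice randomness you would be union-bounding over, and no averaging mechanism forces a particular window to track the global fraction $f_i$. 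Worse, the clustering scenario you yourself raise really does break the recursion rather than being excluded by largeness of the windows: if the $f_i n$ heavy servers were packed into a ball of radius $\Theta(\sqrt{f_i n})$, every window meeting that ball would have $h_{u,j}\approx|S_{u,j}|$, and summing $(h_{u,j}/|S_{u,j}|)^2$ over the $n$ requests yields only $f_{i+1}\lesssim f_i$ --- no progress at all, no matter how large $\mu$ is. So the hypothesis cannot be ``spent'' on a local-to-global concentration claim; one must show the process dynamics cannot create such clustered configurations, and naive layered induction has no handle on that.

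This is precisely why the paper does not re-derive the recursion directly. It instead reduces to the Kenthapadi--Panigrahi theorem: it defines the configuration graph $H$ (servers adjacent iff within torus distance $2r$ and sharing a cached file), proves a goodness property of the placement --- every node caches $\Theta(M)$ distinct files and, crucially, every pair of nodes shares at most $O(1)$ files, a quantity your proposal never controls but which is needed so that no single pair of servers is co-sampled too often --- deduces that $H$ is almost $\Delta$-regular with $\Delta=\Theta(M^2r^2/n)\ge n^{\alpha}$ and that each request of Strategy~II samples an edge of $H$ with probability $O(1/e(H))$, and then invokes Theorem~\ref{thm:related}, whose own proof (in \cite{KP06}) is a witness-tree argument. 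Your closing remark that the whole argument can be phrased as a witness-tree bound points at the correct machinery, and is essentially the KP06 analysis you would have to reproduce; but as written, your proposal rests its central lemma on a concentration claim that is false in the generality required, so it is an outline with the key step unproven rather than a proof.
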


\begin{remark}
To have a more accessible proof, in Theorem~\ref{main:twochoice}, we have assumed that $K=n$.
Note that the proof techniques can also be extended to the case where $K=O(n)$.
\end{remark}

In order to prove the theorem, let us first present an interesting result that was shown in \cite{KP06} as follows.
\begin{thm}[\cite{KP06}]\label{thm:related}
	Given an almost $\Delta$-regular graph\footnote{A graph is said to be almost $\Delta$-regular, if each vertex has degree $\Theta(\Delta)$.} $G$ with $e(G)$ edges and $n$ nodes
	representing $n$ bins, if $n$ balls are thrown into the bins
	by choosing a random edge with probability at most $O(1/e(G))$   and placing into the smaller
	of the two bins connected by the edge, then the maximum
	load is $\Theta(\log \log n) + O\left( \frac{\log n}{\log(\Delta/\log^4 n )} \right) + O(1)$ w.h.p.
\end{thm}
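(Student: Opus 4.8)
The plan is to prove the theorem by the \emph{layered induction} method of Azar, Broder, Karlin and Upfal, adapted to the graph setting, where the only graph-specific input is a bound on the number of edges induced by a small vertex set. First I would fix the quantity to track: let $B_i$ be the set of bins whose final load is at least $i$, and put $\mu_i := |B_i|$. The crucial structural observation is that a ball can raise a bin to load at least $i+1$ only if, at the moment it is placed, the \emph{smaller-loaded} endpoint of its chosen edge already has load at least $i$; since the smaller endpoint has load $\ge i$, the larger one does too, so both endpoints lie in $B_i$. As $B_i$ only grows over time, every such ball chooses an edge internal to the final set $B_i$. Hence $\mu_{i+1}$ is at most the number of balls whose edge lies inside $B_i$, whose expectation is at most $n\cdot e(G[B_i])\cdot O(1/e(G)) = O(e(G[B_i])/\Delta)$, using $e(G)=\Theta(n\Delta)$ and that each edge is selected with probability $O(1/e(G))$.

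Second, the graph-specific lemma: in an almost $\Delta$-regular graph any set $S$ of $s$ vertices induces $e(G[S]) \le \tfrac12 s\min(s,\Delta)$ edges, combining the degree bound $s\Delta/2$ with the trivial bound $\binom{s}{2}$. Feeding this into the previous step yields a two-regime recursion for $\Ex{\mu_{i+1}}$: when $\mu_i \le \Delta$ the clique bound dominates and $\Ex{\mu_{i+1}} \lesssim \mu_i^2/\Delta$, whereas when $\mu_i > \Delta$ the degree bound dominates and $\Ex{\mu_{i+1}} \lesssim \mu_i$. Writing $\mu_i = \Delta y_i$ in the first regime turns the recursion into $y_{i+1}\lesssim y_i^2$, i.e. doubly exponential decay; this contributes the $\Theta(\log\log n)$ term, since once $\mu_i$ drops below $\Delta$ only $O(\log\log n)$ further levels suffice to drive the count below $1$.

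Third, I would make the induction rigorous by fixing in advance a decreasing sequence of thresholds $\beta_i n$ and proving $\mu_i \le \beta_i n$ level by level, with a union bound over the $O(\log n)$ levels. Conditioned on $\mu_i \le \beta_i n$, the number of balls landing inside $B_i$ is a sum of almost-independent indicators, so a Chernoff/Poisson tail bound controls its deviation; the additive slack in this concentration step is what produces the $\mathrm{poly}(\log n)$ (here $\log^4 n$) correction appearing inside the logarithm of the final bound. To avoid union-bounding over all $\binom{n}{\mu_i}$ candidate heavy sets, I would either use the monotonicity of $B_i$ with a fixed-threshold argument, or switch to the \emph{witness-tree} formulation, in which a ball at height $h$ forces a depth-$h$ binary tree of earlier balls whose two branches follow the two endpoints of successive edges; bounding the number and probability of such trees gives the same two terms directly.

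The hard part will be the large-count regime $\mu_i > \Delta$ (the sparse-graph regime). There the naive first-moment/degree-bound estimate only yields a constant-factor per-level shrink, which would cost $\Omega(\log(n/\Delta))$ levels, far worse than the claimed $O(\log n/\log(\Delta/\log^4 n))$. Obtaining the correct level count requires exploiting that the \emph{actual} random heavy set is far sparser than the worst case $e(G[S])\le s\Delta/2$ suggests, which is precisely where the $\log^4 n$ threshold and the finer witness-tree truncation enter: once $\mu_i$ falls below roughly $\Delta/\log^4 n$ the process enters the doubly-exponential regime, and the number of levels spent above this threshold is $O(\log n/\log(\Delta/\log^4 n))$. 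Carefully coupling the two endpoint choices, which are correlated through the shared random edge, and controlling the dependence across levels is the main technical obstacle, and is the step where the argument of \cite{KP06} does its real work.
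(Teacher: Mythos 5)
Your proposal should first be measured against what the paper actually does: the paper gives \emph{no} proof of this theorem. It is imported from \cite{KP06}, and the authors only add a remark that the generalization from uniformly chosen edges to edges chosen with probability at most $O(1/e(G))$ ``follows the original proof's idea with some modifications in computation parts,'' which they omit. So the relevant comparison is with the witness-tree proof of \cite{KP06} itself. Your skeleton is faithful to that argument in its easy parts: the observation that a ball of height $i+1$ selects an edge with both endpoints in the final set $B_i$, the expectation bound $\Ex{\mu_{i+1}}=O\left(e(G[B_i])/\Delta\right)$ using $e(G)=\Theta(n\Delta)$ and the $O(1/e(G))$ sampling bound (so your sketch correctly absorbs the paper's generalization), the induced-edge bound $e(G[S])\le \tfrac12 s\min(s,\Delta)$, and the resulting recursion $y_{i+1}\lesssim y_i^2$ with $\mu_i=\Delta y_i$, which correctly yields the $\Theta(\log\log n)$ endgame once $\mu_i$ falls comfortably below $\Delta$. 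Your plan to fix the circularity (random $B_i$) via predeclared thresholds and moderate-dependence domination is also sound.

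The genuine gap is that the term $O\left(\frac{\log n}{\log(\Delta/\log^4 n)}\right)$ --- the entire content of the theorem beyond the classical two-choice bound --- is asserted rather than proven. In the regime $\mu_i>\Delta$ your own estimate gives only $\Ex{\mu_{i+1}}\le C\mu_i$ with an implied constant $C$ you do not control (it may exceed $1$, since the sampling probability is merely $O(1/e(G))$), so layered induction guarantees no decay at all there, let alone the per-level shrink by a factor of roughly $\Delta/\log^4 n$ that the claimed level count requires. You candidly flag this and gesture at ``witness-tree truncation,'' but you never define the witness trees, never perform the counting over pruned trees with repeated balls and edges that produces the $\log^4 n$ threshold, and never carry out the union bound that converts tree counts into the stated level bound --- and that counting is precisely where \cite{KP06} does its work (the $\log^4 n$ arises from this combinatorial pruning, not, as you speculate, from additive Chernoff slack in a layered induction). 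Two smaller omissions: the $\Theta(\log\log n)$ lower-bound half of the statement is never addressed, and the finish below $\mathrm{polylog}(n)$ occupancy (where concentration fails and one needs a Markov-plus-constantly-many-levels argument) is only hinted at. As written, the proposal reduces the theorem to exactly the lemma it was supposed to establish.
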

\begin{remark}
Note that in the original theorem presented in \cite{KP06}, it is assumed that each edge is chosen uniformly among all edges of graph $G$. However, here we slightly generalize the result so that each edge is chosen with probability at most $O(1/e(G))$. The proof follows the original proof's idea with some  modifications in computation parts, where due to lack of space we omit.
\end{remark}
In order to apply Theorem~\ref{thm:related}, we first need to define a new graph $H$ as follows.
\begin{definition}[Configuration Graph]
For  given parameter $r$, configuration graph $H$ is defined as a graph	whose nodes represent  the servers and two nodes, say $u$ and $v$, are connected if and only if  $u$ and $v$ have cached a common file and $d(u, v)\le 2r$ in the torus.
\end{definition}


For every two servers $u$ and $v$,
let $T(u, v)$ be the set of distinct files that have been cached in both nodes $u$ and $v$. Also denote $|T(u, v)|$ by $t(u, v)$. Define $t(u)$ to be the number of distinct  cached files  in $u$. Now, let us define \emph{goodness} of a placement strategy as follows.
\begin{definition}[Goodness Property]
For every positive constant $\delta\in[0, 1]$ and $\mu=O(1)$, we say the  file placement strategy is \emph{$(\delta, \mu)$-good}, if for every $u$ and $v$, $t(u)\ge \delta M$ and $t(u, v)<\mu$.
\end{definition}

\begin{lemma}\label{lem:GoodnessProperty}
The proportional cache placement strategy introduced in \S\ref{sec:Notation_ProblemSetting}, is $(\delta, \mu)$-good w.h.p.
\end{lemma}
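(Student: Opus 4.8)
The plan is to verify the two defining conditions of $(\delta,\mu)$-goodness separately, since in the relevant regime ($K=n$, $M=n^{\alpha}$ with $0<\alpha<1/2$, and $\mathcal{P}$ uniform) the $M$ cached files of each server are i.i.d.\ uniform draws from $[K]$, and the caches of distinct servers are independent. Throughout, the governing quantity is the ratio $M^{2}/K=n^{2\alpha-1}=o(1)$, which is small precisely because $\alpha<1/2$; this is what makes coincidences between draws rare. I would fix any constant $\delta<1$ and show that a sufficiently large constant $\mu=\mu(\alpha)$ works, establishing each condition with per-object failure probability $n^{-\Omega(1)}$ (indeed much smaller for the first) and then closing with a union bound.

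For the first condition, $t(u)\ge\delta M$ for every $u$, I would lower-bound the number of distinct values among the $M$ draws at a single server. Writing $t(u)=M-R$, where $R$ counts the draws landing on a previously seen value, I would observe that, conditioned on the history, the $i$-th draw is a repeat with probability at most $(i-1)/K\le M/K$, so $R$ is stochastically dominated by $\mathrm{Bin}(M,M/K)$. Since this dominating binomial has mean $M^{2}/K=n^{2\alpha-1}=o(1)$ while we only need to rule out $R\ge(1-\delta)M=\Omega(n^{\alpha})$ — a deviation enormously above the mean — a Chernoff bound gives $\Pr{R\ge(1-\delta)M}\le(eM^{2}/(K(1-\delta)M))^{(1-\delta)M}=\exp(-\Omega(n^{\alpha}\log n))$. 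A union bound over the $n$ servers then yields $t(u)\ge\delta M$ for all $u$ w.h.p. Here it is essential to use the exponential tail rather than a first-moment estimate: the expected total number of collisions aggregated over all servers is $\Theta(n^{2\alpha})\to\infty$, so a Markov-type bound per server cannot survive the union over $n$ servers.

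For the second condition, $t(u,v)<\mu$ for every pair $(u,v)$, I would estimate, for a fixed set $S$ of $\mu$ files, the probability that all of $S$ is cached at $u$. Assigning to each file of $S$ a distinct ``witness'' draw index and union-bounding over the at most $M^{\mu}$ such assignments gives $\Pr{S\subseteq u}\le(M/K)^{\mu}$, and by independence of the two caches $\Pr{S\subseteq u\text{ and }S\subseteq v}\le(M/K)^{2\mu}$. Summing over the $\binom{K}{\mu}\le K^{\mu}/\mu!$ choices of $S$ yields
\[
\Pr{t(u,v)\ge \mu}\ \le\ \frac{1}{\mu!}\left(\frac{M^{2}}{K}\right)^{\mu}=\frac{1}{\mu!}\,n^{(2\alpha-1)\mu}.
\]
Taking a union bound over the at most $n^{2}$ pairs leaves $n^{2+(2\alpha-1)\mu}/\mu!$, so choosing the constant $\mu>2/(1-2\alpha)$ (e.g.\ $\mu=\lceil 2/(1-2\alpha)\rceil+1$) makes this $n^{-\Omega(1)}$, giving $t(u,v)<\mu$ for all pairs w.h.p.

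The routine parts are the Chernoff and counting estimates; the only real subtlety, which I would flag as the main obstacle, is the interaction between concentration and the union bounds. For $t(u)$ one must exploit the exponential (not first-moment) tail because the expected collision count aggregated over all $n$ servers diverges, and for $t(u,v)$ one must take $\mu$ a large enough constant — determined by how far $\alpha$ is below $1/2$ — so that the per-pair bound $n^{(2\alpha-1)\mu}$ beats the quadratic number of pairs. Combining the two parts by a final union bound shows the proportional placement is $(\delta,\mu)$-good w.h.p.
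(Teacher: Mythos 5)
Your proof is correct, but it takes a genuinely different route from the paper's. The paper argues via exact multiset counting: it identifies a cache with a nonnegative integer solution of $\sum_{i=1}^{K}x_i=M$ (stars and bars), writes closed-form expressions such as $\Pr{t(u)=s}={K\choose s}{M-1\choose M-s}\big/{K+M-1\choose M}$ and $\Pr{t(u,v)\ge t}={K\choose t}\bigl({K+M-t-1\choose M-t}\big/{K+M-1\choose M}\bigr)^{2}$, bounds these with elementary binomial inequalities, fixes the particular constants $\delta=(1-\alpha)/3$ and $\mu\ge 5/(1-2\alpha)$, and then union-bounds over nodes and pairs exactly as you do. You instead stay inside the i.i.d.-draws model: stochastic domination of the repeat count by $\mathrm{Bin}(M,M/K)$ plus a Chernoff tail for the $t(u)$ condition, and a witness-assignment union bound over the $\binom{K}{\mu}$ candidate file sets for the $t(u,v)$ condition; both routes land on the same per-object rates, $\exp(-\Omega(n^{\alpha}\log n))$ per node and $n^{(2\alpha-1)\mu}$ per pair, driven by the same quantity $M^{2}/K=n^{2\alpha-1}=o(1)$. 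Your version buys three things: it works for \emph{every} constant $\delta<1$ rather than one specific value; it needs only $\mu>2/(1-2\alpha)$ rather than $5/(1-2\alpha)$ (both suffice, since the lemma only claims w.h.p.); and, notably, it is faithful to the sampling model the paper actually states in \S\ref{sec:Notation_ProblemSetting} (each node draws $M$ files i.i.d.\ with replacement), whereas the paper's formulas implicitly treat all ${K+M-1\choose M}$ multisets as equally likely — the uniform-multiset distribution, under which low-distinctness caches are, if anything, more likely than under i.i.d.\ draws with their multinomial weights — so your computation is arguably the cleaner match to the model, while the paper's buys exact distributional formulas and explicit constants under its combinatorial encoding. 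Your side remark is also accurate: a first-moment (Markov) bound per server would give only $\Theta(n^{\alpha-1})$, which does not survive the union over $n$ servers, so the exponential tail you invoke is genuinely needed.
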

\begin{proof}
	Clearly, every set of cached files in every node (with replacement) can be one-to-one mapped to a non-negative integral solution of equation $\sum_{i=1}^Kx_i=M$, where each $x_i$ expresses the number of times  that file $i$  has been cached in the node. A combinatorial argument shows that, the equation has ${K+M-1 \choose M}$ non-negative integer solutions. So for each $1\le s\le  M$, we have
	\begin{equation}\label{eq:Prob_c(u)=s}
	\Pr{t(u)=s}=\frac{{ K \choose s}{M-1 \choose M-s}}{{ K+M-1 \choose M}},
	\end{equation}
	where we first fixed a set of file indexes   of size $s$, say $I=\{i_1, i_2,\ldots, i_s\}$, and then count the number of integral solutions to the equation $\sum_{i\in I}x_i=M-s$.

In order to bound \eqref{eq:Prob_c(u)=s}, we note that for every $1\le a\le b,$ $(b/a)^a \le  {b \choose a} \le b^a$ and also ${b \choose a}\le 2^b$.  Recall that we assumed  $K=n$ and $M=n^\alpha$, $0<\alpha<1/2$. Hence for every $1\leq s \leq \delta M$, we have
	\begin{align*}
	\Pr{t(u)=s}&\le \frac{K^s2^{M}}{{K\choose M}}\le 
	\frac{K ^s2^{M}}{(K/M)^M}=
	{(2M)^M}{K^{s-M}}\\
	&
	\leq  (2n^\alpha n^{\delta-1})^M.
	\end{align*}
 Thus, by choosing $\delta=(1-\alpha)/3$, for every \mbox{$1\le s\le \delta M $}, we have
	\begin{align*}
	\Pr{t(u)=s}&\le (2n^{\alpha+\delta-1})^M=
	(2n^{2\alpha/3-2/3})^M\\
	& \le (2n^{-1/3})^M=n^{-\omega(1)},
	\end{align*}
	where the last equality follows due to $M=n^{\alpha}=\omega(1)$.
	Now the union bound over all $1\le s\le \delta M$ and $n$ nodes yields 
	\begin{align}\label{prob:size}
	\Pr{\exists u\in V: t(u) \le \delta M}=n^{-\omega(1)}.
	\end{align}
	By a similar argument, for each $1\le t\le M$ and every $u$ and $v$, we have
	\begin{align*}
	\Pr{t(u, v)\geq t}={K\choose t}\left(\frac{{K+M-t-1 \choose M-t}}{{K+M-1 \choose M}}\right)^2.
	\end{align*}
	Thus,  for any constant $\mu \ge 5/(1-2\alpha)$, we can write
	\begin{align*}
	&\Pr{t(u, v)\ge \mu}\\
	&\le K^\mu \left(\frac{(K+M-\mu-1)!M!}{(K+M-1)!(M-\mu)!}\right)^2\\
	&\le K^\mu \left(\frac{M^\mu}{K^\mu}\right)^2\le \frac{M^{2\mu}}{K^\mu}= n^{(2\alpha-1)\mu}=O(1/n^5).
	\end{align*}
 By applying the union bound over all pairs of servers, for every $u$ and $v$ we have 
	\begin{align}\label{prob:com}
	\Pr{t(u, v)\ge  \mu}=O(1/n^3).
	\end{align}
	Hence, $t(u, v) < \mu$ w.h.p.
Putting  inequalities (\ref{prob:size}) and (\ref{prob:com}) together  concludes the proof.
\end{proof}

The following lemma presents some useful properties of $H$ and Strategy II.
\begin{lemma}\label{lem:Gp}
	Conditioning on goodness of file placement and assuming $K=n$, $M=n^\alpha$ and $r=n^{\beta}$ with $\alpha+2\beta\ge 1+2\log\log n/\log n$, we have
	\begin{itemize}
		\item[(a)] W.h.p. $H$ is almost $\Delta$-regular with $\Delta=\Theta\left(\frac{M^2r^2}{K}\right)$.
		\item[(b)] For each request, Strategy II samples an edge of $H$ (two servers) with probability $O(1/e(H))$.
	\end{itemize}
\end{lemma}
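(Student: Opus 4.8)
The plan is to reduce both parts to concentration statements about the random cache placement. Concretely, I will show that with high probability the placement simultaneously (i) is good (Lemma~\ref{lem:GoodnessProperty}), (ii) makes $H$ almost $\Delta$-regular, and (iii) gives every request $\Theta(Mr^2/K)$ candidate servers; once a placement with these three properties is fixed, part~(b) becomes a deterministic counting estimate. Throughout I will use that on the torus $|B_\rho(u)|=\Theta(\rho^2)$ for $\rho\le\sqrt n$, and that $M^2/K=n^{2\alpha-1}=o(1)$.

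For part~(a), fix a server $u$ and condition on its own cache, so that $t(u)\in[\delta M,M]$ by goodness. Writing $\deg_H(u)=\sum_{v\in B_{2r}(u)\setminus\{u\}}\1[T(u)\cap T(v)\neq\emptyset]$, the indicators are independent across $v$ once $T(u)$ is fixed (distinct servers cache independently), each with success probability $1-(1-t(u)/K)^M=\Theta(Mt(u)/K)=\Theta(M^2/K)$. Hence $\Ex{\deg_H(u)}=\Theta(r^2)\cdot\Theta(M^2/K)=\Theta(\Delta)$. Since the hypothesis $\alpha+2\beta\ge 1+2\log\log n/\log n$ forces $\Delta=\Omega(n^{\alpha}\log^2 n)=\omega(\log n)$, a Chernoff bound makes the deviation probability $n^{-\omega(1)}$, and a union bound over the $n$ servers shows every degree is $\Theta(\Delta)$ w.h.p.; that is, $H$ is almost $\Delta$-regular. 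Summing degrees then gives $e(H)=\Theta(n\Delta)=\Theta(nM^2r^2/K)$.

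For part~(b), observe first that any pair Strategy~II can return is an edge of $H$: both selected servers lie in $B_r(w)$ for the origin $w$, hence are within distance $2r$ of one another, and both cache the requested file, so they share a file. I then need one more concentration fact: for fixed origin $w$ and file index $j$, the candidate count $N_{w,j}=|\{v\in B_r(w):W_j\in T(v)\}|$ is a sum of $\Theta(r^2)$ independent indicators of mean $\Theta(M/K)$, so $\Ex{N_{w,j}}=\Theta(Mr^2/K)=\Omega(\log^2 n)$ by the hypothesis; Chernoff plus a union bound over all $n\cdot K=n^2$ pairs $(w,j)$ yields $N_{w,j}=\Theta(Mr^2/K)$ everywhere w.h.p. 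Fixing such a placement, the probability that a request samples a given edge $\{a,b\}$ is
\[
\sum_{w\in B_r(a)\cap B_r(b)}\frac1n\sum_{j:\,W_j\in T(a,b)}\frac1K\cdot\frac{1}{\binom{N_{w,j}}{2}},
\]
since the origin is uniform over servers, the file is uniform, and the two candidates are drawn uniformly without replacement. Bounding $|B_r(a)\cap B_r(b)|\le|B_r(a)|=O(r^2)$, using $t(a,b)<\mu=O(1)$ from goodness, and substituting $N_{w,j}=\Theta(Mr^2/K)$, this is $O\left(r^2\cdot\frac1n\cdot\frac{1}{K}\cdot\frac{K^2}{M^2r^4}\right)=O\left(\frac{K}{nM^2r^2}\right)=O(1/e(H))$, which is the claim.

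The two Chernoff-plus-union-bound arguments are routine; the delicate step is part~(b). There one must confirm that the output is always an edge of $H$, invoke goodness both to cap the number of shared file types at $O(1)$ and to control the overlap region $B_r(a)\cap B_r(b)$, and---most importantly---use the uniform lower bound $N_{w,j}=\Omega(Mr^2/K)$ sitting in the denominator. This last point is exactly where the proximity--memory budget $\alpha+2\beta\ge 1+2\log\log n/\log n$ is spent: it guarantees $\Omega(\log^2 n)$ candidates per request, which is what lets the $N_{w,j}$ concentration survive the $n^2$-fold union bound and keeps $\binom{N_{w,j}}{2}$ large enough to pull the per-edge probability down to $O(1/e(H))$.
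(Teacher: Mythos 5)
Your proof is correct and follows essentially the same route as the paper's: part (a) by conditioning on $u$'s own cache and applying a Chernoff bound to the binomial degree $\mathrm{Bin}(\Theta(r^2),\Theta(M^2/K))$, and part (b) by concentrating the per-request candidate counts (your $N_{w,j}$ is the paper's $F_j(w)$), writing the identical double sum over origins in $B_r(a)\cap B_r(b)$ (the paper's $S_{u,v}$) and shared files, and capping the latter by $\mu=O(1)$ via goodness. Your added explicit checks---that any sampled pair is an edge of $H$, and that the hypothesis $\alpha+2\beta\ge 1+2\log\log n/\log n$ is spent exactly to make $\Ex{N_{w,j}}=\Omega(\log^2 n)$ survive the union bound---are faithful elaborations of steps the paper leaves implicit.
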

\begin{proof}
	Consider arbitrary node $u$ with $s$ distinct files. Then by definition of $H$, for every node $v$  we have 
	\begin{align*}
	p_s:=\Pr{t(u, v)\ge 1| t(u)=s }&=1-\left(\frac{K-s}{K}\right)^M\\
	&=\frac{sM}{K}(1+o(1)),
	\end{align*}
	where $1\le s\le M$.
	On the other hand $u$ and $v$ are connected in $H$, if in addition $d_G(u, v)\leq 2r$.
	Therefore for every given node $u$ with $s$ distinct cached files, $d(u)$ in $H$ (degree of $u$ in $H$) has a binomial distribution $\mathrm{Bin}(b_{2r}(u), p_s)$, where  $b_{2r}(u)=|B_{2r}(u)|$. Hence applying a Chernoff bound implies that with probability $1-n^{-\omega(1)}$, we have 
	\[
	d(u)=\frac{sMb_{2r}(u)}{K}(1+o(1)).
	\] 

	Conditioning on the goodness of file placement, $s=t(u)=\Theta(M)$.
	Also  by symmetry of torus, we have $b_{2r}(u)=\Theta(r^2)$, for every $u$. So, with high probability for every $u$, we have
	\[
	d(u)=\Theta\left({M^2r^2}/{K}\right),
	\]
	where this concludes the proof of part (a).

	Now it remains to show that Strategy~II picks an edge of $H$, with probability $O(1/e(H))$. First, notice that
	\begin{align}\label{eq:edgesize} e(H)=\Theta\left({nM^2r^2}/{K}\right)=\Theta(M^2r^2),
	\end{align}
    as $K=n$. 
	Then recall that each file is cached in every node with probability 
	$p=1-(1-1/K)^M=M/K(1+o(1))$, independently. 
	For any given node $u$ and file $W_j$, let $F_j(u)$ be the number of nodes at distance at most $r$ that have cached file $W_j$. Then $F_j(u)$ has a binomial distribution $\mathrm{Bin}(b_r(u), p)$, where $b_r(u)=|B_r(u)|$. So 
	\[
	\Ex{F_j(u)}=b_r(u)\cdot p=\Theta(r^2M/K),
	\] 
	where $b_r(u)=\Theta(r^2)$ for every $u$.
	Since \mbox{$\alpha+2\beta\ge 1+2\log\log n/\log n$} we have $\Ex{F_j(u)}=\omega(\log n)$, for every  $u$ and $j$.
	Now, applying a Chernoff bound for $F_j(u)$ implies that with probability $1-n^{-\omega(1)}$, $F_j(u)$ concentrates around its mean and hence, w.h.p., we have for every $u$ and $j$
	 \[	
	F_j(u)=\Theta(r^2M/K)=\Theta(r^2M/n).
	\] 
	
Consider an edge $(u, v)\in E(H)$, with $t(u, v)=t$. Define $S_{u,v}$ to be  the set  of nodes that may pick pair $u$ and $v$ randomly in Strategy~II. It is not hard to see that $|S_{u, v}|=O(r^2)$. Now we have,	
	\begin{align}
	&\Pr{ (u, v)\in E(H) \text{ is picked by Strategy II}| t(u, v)=t} \nonumber\\
	&\quad\quad =\sum_{j\in T(u, v)}\frac{1}{K}\sum_{w\in S_{u, v}}\frac{1}{n}\frac{1}{{F_j(w)\choose 2}} \nonumber\\
	&\quad\quad =\frac{1}{n^2}
	\sum_{j\in T(u, v)}\sum_{w\in S_{u, v}}\frac{1}{{F_j(w)\choose 2}} \nonumber\\
	&\quad\quad =\frac{1}{n^2}
	\sum_{j\in T(u, v)}\sum_{w\in S_{u, v}}
	\Theta({n^2}/{r^4M^2}). \label{eq:CondPr_(u,v)_picked}
	\end{align}
Conditioned on ``goodness,'' we have  for every $(u, v) \in E(H)$, $1\le t(u, v)< \mu$. So  \eqref{eq:CondPr_(u,v)_picked} can be simplified as 
\begin{align*}
&\Pr{ (u, v)\in E(H) \text{ is picked by Strategy II}}\\
&\quad\quad\le \Theta({ \mu |S_{u, v}|}/{r^4M^2})\\
&\quad\quad =O(1/r^2M^2)=O(1/e(H)),
	\end{align*}
	where the last equality follows from \eqref{eq:edgesize}.
\end{proof}

\begin{proof}[Proof of Theorem \ref{main:twochoice}]
	Applying Lemma \ref{lem:Gp} shows that w.h.p. the configuration graph $H$ is an almost $\Delta$-regular graph where $\Delta=M^2r^2/n$. Moreover, in each step, every edge of $H$ is chosen randomly with probability $O(1/e(H))$. 
	Hence, we can apply Theorem~\ref{thm:related} and conclude that w.h.p. the maximum load is at most
	\[
\Theta(\log\log n)+ O\left( \frac{\log n}{\log(\Delta/\log^4 n )} \right) =\Theta(\log\log n)+ O(1),
	\]
where it follows because $\alpha+2\beta\ge 1+2\log\log n/\log n$ and hence $\Delta=M^2r^2/n=n^{2\alpha+2\beta-1}>n^{\alpha}$.
\end{proof}

Now let us present our next result regarding to the $M=K$ regime.

\begin{thm}\label{thm:full}
 Suppose $M=K$ and Uniform distribution $\mathcal{P}$ over the file library. Then Strategy II achieves the maximum load $L=\Theta\left(\log \log n\right)$ and communication cost $C=\Theta\left(n^{\beta}\right)$ for any  $\beta=\Omega(\log\log n/\log n)$.
 \end{thm}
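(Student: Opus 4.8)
The plan is to reduce the $M=K$ case to the graphical balanced-allocation result of Theorem~\ref{thm:related}, exactly as in the proof of Theorem~\ref{main:twochoice}, but with a configuration graph that now encodes \emph{only} the proximity constraint. Since $M=K$, every server effectively caches every file, so any server can serve any request and the memory-induced correlation disappears entirely: the goodness property is vacuous here because every pair of servers shares all $K$ files. Consequently the same configuration graph $H$ reduces to placing an edge between $u$ and $v$ precisely when $d_G(u,v)\le 2r$, independent of file type. First I would establish that $H$ is almost $\Delta$-regular with $\Delta=\Theta(r^2)$: by the torus symmetry every node sees $b_{2r}(u)=\Theta(r^2)$ nodes within distance $2r$, and all of them are neighbours in $H$.

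Next I would verify that Strategy~II samples each edge of $H$ with probability $O(1/e(H))$, where $e(H)=\Theta(nr^2)$. The key simplification over Lemma~\ref{lem:Gp} is that the file type is now irrelevant to the choice: a request born at $w$ picks two uniform nodes from all of $B_r(w)$ (every node there has the file), so the pair $(u,v)$ is chosen with probability $1/\binom{b_r(w)}{2}$ whenever $u,v\in B_r(w)$, regardless of which file was requested. Summing over the origin $w$, which must lie in $S_{u,v}=B_r(u)\cap B_r(v)$ with $|S_{u,v}|=O(r^2)$, and over the uniform $1/n$ choice of origin, gives
\[
\Pr{(u,v)\text{ is picked}}=\sum_{w\in S_{u,v}}\frac{1}{n}\,\frac{1}{\binom{b_r(w)}{2}}=O\!\left(\frac{r^2}{n\,r^4}\right)=O\!\left(\frac{1}{nr^2}\right)=O\!\left(\frac{1}{e(H)}\right).
\]
The weighted sum over the $K$ shared files collapses to $1$, which is exactly why the $(\delta,\mu)$-goodness step, and the concentration bounds for $t(u,v)$ and $F_j(u)$, are not needed in this regime.

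With these two facts in hand I would invoke Theorem~\ref{thm:related} with $\Delta=\Theta(r^2)=\Theta(n^{2\beta})$, obtaining a maximum load of $\Theta(\log\log n)+O\!\left(\log n/\log(\Delta/\log^4 n)\right)+O(1)$; substituting $\Delta=\Theta(n^{2\beta})$ and using the hypothesis $\beta=\Omega(\log\log n/\log n)$ bounds the correction term, mirroring the final display in the proof of Theorem~\ref{main:twochoice}. For the communication cost, note that the serving node always lies in $B_r(u)$, so $C=O(r)$; conversely, a uniformly random node of $B_r(u)$ is at distance $\Omega(r)$ from $u$ with constant probability, so both candidates, and hence the chosen one, are at distance $\Omega(r)$ with constant probability. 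This gives $C=\Omega(r)$ and thus $C=\Theta(r)=\Theta(n^{\beta})$.

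The part needing the most care is the edge-sampling bound: one must check that $b_r(w)=\Theta(r^2)$ and $|S_{u,v}|=O(r^2)$ hold uniformly over all nodes and edges, using the torus symmetry to suppress boundary effects, so that the $O(1/e(H))$ estimate applies to every edge of $H$. The conceptual point to flag is that, in contrast to Theorem~\ref{main:twochoice}, the degree $\Delta$ no longer carries the $M^2/K$ factor and is governed purely by $r^2$; hence the threshold on the communication radius needed to recover the power of two choices is determined solely by the proximity parameter $\beta$.
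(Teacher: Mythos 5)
Your proposal is correct and takes essentially the same route as the paper's proof: with $M=K$ the configuration graph $H$ degenerates to the pure proximity graph ($u\sim v$ iff $d_G(u,v)\le 2r$), which is regular with $\Delta=\Theta(r^2)=\Theta(n^{2\beta})$, and Theorem~\ref{thm:related} together with $\beta=\Omega(\log\log n/\log n)$ yields $L=\Theta(\log\log n)$ and $C=\Theta(r)$. If anything you are slightly more careful than the paper, which loosely asserts that Strategy~II ``is equivalent to choosing an edge uniformly from $H$'' — in fact the probability of picking $(u,v)$ varies with $|S_{u,v}|$ and hence with $d_G(u,v)$, so your explicit verification of the $O(1/e(H))$ edge-sampling bound (the condition actually required by the generalized Theorem~\ref{thm:related}) is the right way to state this step.
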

 
 \begin{proof}
  Let us choose $r=n^{\beta}$, for some $\beta=\Omega(\log\log n/\log n)$. 
By the assumption $M=K$, the configuration graph $H$ (corresponding to $r$) is a graph in which two nodes $u$ and $v$ are connected if and only if $d(u,v)\le 2r$. Since our network is symmetric, for every $u$, $|B_r(u)|=\Theta(r^2)$ and hence $H$ is a  regular graph with $\Delta=\Theta(r^2)$.  
Also it is not hard to see that  Strategy II is equivalent to choosing an edge uniformly from $H$. Applying Theorem \ref{thm:related} (\cite{KP06}) to $H$ results in the maximum load of $\Theta(\log\log(n))$. In addition, choosing two random nodes in $|B_r(u)|=\Theta(r^2)$ results in communication cost of $C=\Theta(r)=\Theta\left(n^{\beta}\right)$. 
 %
%
 %
 \end{proof}
 
 The main point of Theorem \ref{thm:full} is that we can just have $C=\Theta\left(n^{\beta}\right)$, for $\beta=\Omega(\log\log n/\log n)$, to benefit from the luxury of power of two choices, which is a very encouraging result.


\section{Simulations}\label{sec:Simulations}
In this section, we demonstrate the simulation results for two strategies introduced in the previous sections, namely, \emph{nearest replica} and \emph{proximity aware two choices} strategies. The simulation results are shown for the torus topology.
Here, we consider Uniform popularity over the file library. As a result, the file placement is also considered to be uniform over the servers' storage. 

Figure~\ref{fig:OneChoice_SrvSzVar_MaxLoad} shows the maximum load of Strategy~I as a function of the number of servers where different curves correspond to different cache sizes. The network graph is a torus, where $100$ files with Uniform popularity are placed uniformly at random in each node. Each point is an average of $10000$ simulation runs.
This figure confirms that the logarithmic growth of the maximum load, asymptotically proved in Theorem~\ref{thm:imblance}, also holds for intermediate values of $n\approx 100,\ldots, 3000$ which makes the result of Theorem~\ref{thm:imblance} more general. 
Comparing different curves reveals the fact that in larger cache size setting, we have a more balanced network. That happens because enlarging cache sizes results in a more uniform Voronoi tessellation, \ie, having cells with smaller variation in size.

Furthermore, Figure~\ref{fig:OneChoice_CacheSzVar_AvgCost} shows the communication cost of Strategy~I as a function of cache size where different curves correspond to different library sizes. Here, the network graph is a torus of size $2025$ and each point is an average of $10000$ simulation runs. This figure is in agreement with the result of Theorem~\ref{thm:Strategy_I_CommCost}.



\begin{figure}
\begin{center}
\includegraphics[width=0.48\textwidth]{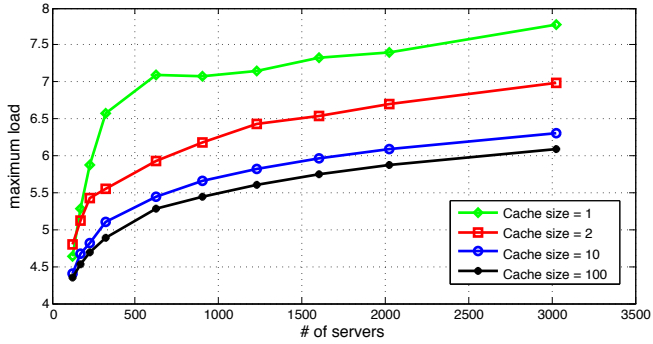}
\end{center}
\caption{The maximum load versus number of servers for Strategy~I.}\label{fig:OneChoice_SrvSzVar_MaxLoad}
\end{figure}

\begin{figure}
\begin{center}
\includegraphics[width=0.48\textwidth]{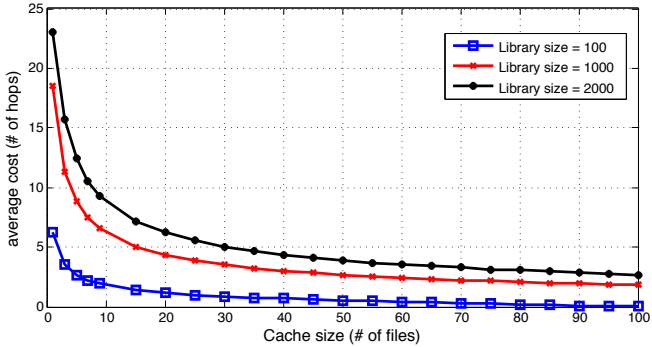}
\end{center}
\caption{The communication cost versus cache size for Strategy~I.}
\label{fig:OneChoice_CacheSzVar_AvgCost}
\end{figure}

\begin{figure}
\begin{center}
\includegraphics[width=0.48\textwidth]{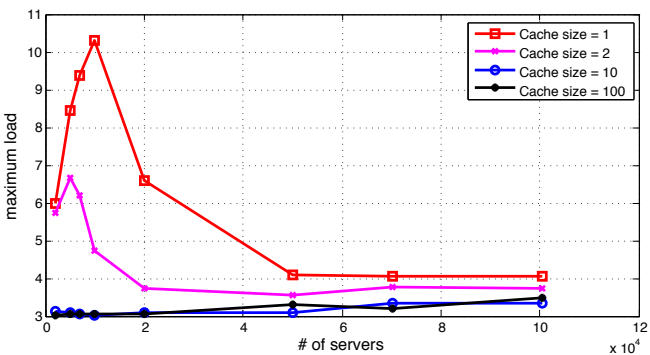}
\end{center}
\caption{The maximum load versus number of servers for Strategy~II. Here, we assume $r=\infty$.}\label{fig:TwoChoice_SrvSzVar_MaxLoad}
\end{figure}

\begin{figure}
\begin{center}
\includegraphics[width=0.48\textwidth]{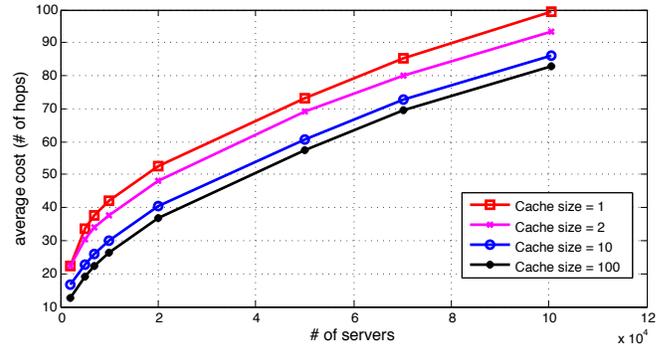}
\end{center}
\caption{The communication cost versus number of servers for Strategy~II. Here, we assume $r=\infty$.}\label{fig:TwoChoice_SrvSzVar_AvgCost}
\end{figure}

In order to simulate Strategy~II, first we set $r=\infty$ to study the effect of cache size on the maximum load and communication cost and then consider the effect of limited $r$ on the performance of the system. Figure~\ref{fig:TwoChoice_SrvSzVar_MaxLoad} shows the maximum load of the network versus number of servers where each curve demonstrates a different cache size. The network graph is a torus, where $2000$ files with Uniform popularity are placed uniformly at random in each node. Each point is an average of $800$ simulation runs.
In each curve, since cache size and number of files are fixed, increasing the number of servers translates to increasing each file replication. 

In Figure~\ref{fig:TwoChoice_SrvSzVar_MaxLoad}, when the file replication is low, due to high correlation between the two choices of Strategy~II, power of two choices is not expected. This is reflected in Figure~\ref{fig:TwoChoice_SrvSzVar_MaxLoad}; for example in the curve corresponding to $M=1$ for $n \le 10000$ we have a fast growth in maximum load which mimics the load balancing performance of Strategy~I. However, for $n>50000$, since there is enough file replication in the network, the load balancing performance is greatly improved due to power of two choices. This is in accordance with the lessons learned from \S\ref{sec:Proximity-Aware_Two_Choices_Strategy}. Also, for $10000 < n < 50000$, we have a transition region where a mixed behaviour is observed. 
Likewise, the curve for $M=2$ shows a similar trend. 
However, for $M=10$ due to memory abundance, we only observe the latter behaviour where power of two choices is achieved.
Observations made above from Figure~\ref{fig:TwoChoice_SrvSzVar_MaxLoad} has an important practical implication. Since employing Strategy~II is only beneficial in networks with high file replication, for other situations with limited cache size, the less sophisticated Strategy~I is a more proper choice.
 
Figure~\ref{fig:TwoChoice_SrvSzVar_AvgCost} draws the communication cost versus number of servers for various cache sizes for similar setting used in Figure~\ref{fig:TwoChoice_SrvSzVar_MaxLoad}. Since in this figure there is no constraint on the proximity the communication cost growth is of order $\Theta(\sqrt{n})$.

In simulations depicted in Figures~\ref{fig:TwoChoice_SrvSzVar_MaxLoad}~and~\ref{fig:TwoChoice_SrvSzVar_AvgCost}, we only consider the case $r=\infty$. In order to investigate the effect of parameter $r$ on the performance of the system, in Figure~\ref{fig:Tradeoff}, we have simulated network operation for different values of $r$. This results in a trade-off between the maximum load and communication cost, as shown in Figure~\ref{fig:Tradeoff}.  Here we consider a torus with $2025$ servers, where $500$ files with Uniform popularity are placed uniformly at random in each node. Each point is an average of $5000$ simulation runs.

In this figure, like before, \ie, Figure~\ref{fig:TwoChoice_SrvSzVar_MaxLoad}, we observe two performance regimes based on the cache size $M$. In high memory regime, \eg, for curves corresponding to $M=50$ and $M=200$, we can achieve the power of two choices by sacrificing a negligible communication cost. On the other hand, in low memory regime, \ie, $M=1$, we cannot decrease the maximum load even at the expense of high communication cost values. For intermediate values of $M$, we clearly observe the trade-off between the maximum load and communication cost.



\begin{figure}
\begin{center}
\includegraphics[width=0.48\textwidth]{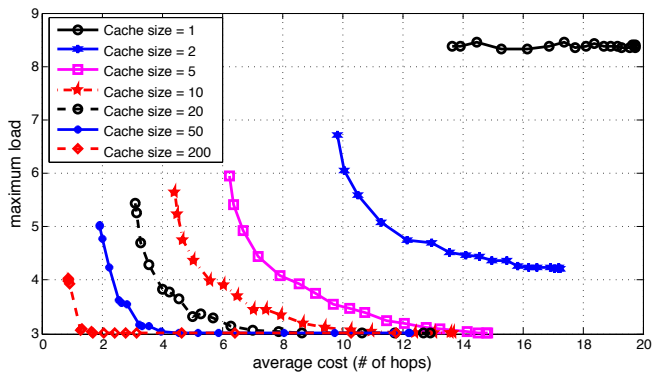}
\end{center}
\caption{The tradeoff between the maximum load and communication cost for Strategy~II.}\label{fig:Tradeoff}
\end{figure}




\section{Discussion, Open Questions and Future Directions}\label{sec:disscuss}
In this section, first, we summarize the paper. Then we bring forward discussion about the proposed schemes, open questions and possible future directions.

In summary, we have considered the problem of randomized load balancing and its tension with communication cost in cache networks. By proposing two request assignment schemes, the trade-off between communication cost and maximum load has been investigated analytically. Moreover, simulation results support our theoretical findings and provide practical design guidelines.


The proposed \emph{proximity-aware two choices} scheme can be implemented in a distributed manner. To see why, notice that upon arrival of each request at each server, this strategy needs two kinds of information to redirect the request. This information can be provided to the requesting server without the need for a centralized authority in the following way. The first one is the cache content of other users in its neighborhood with radius $r$. Since, the cache content dynamic of servers is much slower than the requests arrival, this can be done by periodic polling of nearby servers without introducing much overhead. Also, the cache content placement at each server can be implemented via efficient Distributed Hash Table (DHT) schemes (see, \eg, \cite{BauerHW07} and \cite{KargerLLPLL97}), which can be adopted to dynamic library popularity profiles. This will also enable all users to obtain global cache content information  in a robust and distributed manner. In this paper we assume a static profile and do not go into the details of such schemes. The second type of information is the queue length information of two randomly chosen nodes inside its neighborhood with radius $r$, which can also be efficiently done in a distributed manner by polling or piggybacking.

In practice, request arrivals and servers' operation happen in continuous time which needs a queuing theory based performance analysis. However, as shown in \cite{Mitzenmacher01} and \cite{sur01}, the behaviour of load balancing schemes in continuous time (\ie, known as the supermarket model) and static balls and bins problems are closely related. Thus, we conjecture that our proposed scheme will also have the same performance in queuing theory based model. We postpone a rigorous analysis of such scenario to future work.

In this paper we do not consider any form of coding in the cache content placement and content delivery phases. However, as recently shown in \cite{DBLP_Maddah-AliN14} (and follow up works \cite{DBLP_Caire_JiCM16,DBLP_Karamchandani_NM16,DBLP_Shariatpanahi_MK15}), employing coding in cache networks can reduce network traffic dramatically. An important future work will be investigating the effect of coding techniques in the context of our proposed randomized load balancing scheme.


\section*{Acknowledgment}
The authors would like to thank Seyed~Abolfazl~Motahari, Omid Etesami, Thomas~Sauerwald and Farzad~Parvaresh for helpful discussions and feedback.

\bibliographystyle{IEEEtran}
\bibliography{../diss1}

\appendices
\clearpage

\section{Some Tail Bounds}\label{app:bounds}

\begin{thm}[Chernoff Bounds]\label{app:cher}
	Suppose that $X_1, X_2,\ldots, X_n\in \{0, 1\}$ are independent random variables and let $X=\sum_{i=1}^n X_i$. 
	Then for every $\delta\in (0, 1)$ the following inequalities hold:
	\begin{align*}
	\Pr{X\ge (1+\delta)\Ex{X}} &\le \exp(-\delta^2 \Ex{X}/2),\\
	\Pr{X\le (1-\delta)\Ex{X}} &\le \exp(-\delta^2 \Ex{X}/3).
	\end{align*}
	In particular,
	\[
	\Pr{|X-\Ex{X}|\ge \delta\Ex{X}}\le 2\exp(-\delta^2 \Ex{X}/3).
	\]
\end{thm}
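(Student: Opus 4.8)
The plan is to prove both tail bounds by the standard exponential–moment (Bernstein/Chernoff) method and then to reduce the resulting closed forms to the stated Gaussian-type bounds by an elementary real-variable argument. Write $\mu=\Ex{X}=\sum_{i=1}^n p_i$, where $p_i=\Pr{X_i=1}$. For the upper tail I would fix a parameter $t>0$ and apply Markov's inequality to the nonnegative random variable $e^{tX}$: since $x\mapsto e^{tx}$ is increasing,
\[
\Pr{X\ge (1+\delta)\mu}=\Pr{e^{tX}\ge e^{t(1+\delta)\mu}}\le e^{-t(1+\delta)\mu}\,\Ex{e^{tX}}.
\]
The decisive simplification is that independence factorizes the moment generating function, so that $\Ex{e^{tX}}=\prod_{i=1}^{n}\Ex{e^{tX_i}}=\prod_{i=1}^{n}\bigl(1+p_i(e^t-1)\bigr)$, and the elementary bound $1+x\le e^{x}$ collapses this product into $\exp\bigl(\mu(e^t-1)\bigr)$.

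Combining the two displays gives $\Pr{X\ge(1+\delta)\mu}\le \exp\bigl(\mu(e^t-1)-t(1+\delta)\mu\bigr)$ for every $t>0$, and the next step is to minimize the exponent over $t$. The minimizer is $t=\ln(1+\delta)$, which produces the classical closed form
\[
\Pr{X\ge(1+\delta)\mu}\le\left(\frac{e^{\delta}}{(1+\delta)^{1+\delta}}\right)^{\mu}.
\]
The lower tail is entirely symmetric: applying Markov to $e^{tX}$ with $t<0$ and optimizing yields
\[
\Pr{X\le(1-\delta)\mu}\le\left(\frac{e^{-\delta}}{(1-\delta)^{1-\delta}}\right)^{\mu}.
\]
Up to this point the argument is uniform in $\delta$; the hypothesis $\delta\in(0,1)$ is used only in the final cleanup.

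The remaining, and only mildly delicate, step is to pass from these closed forms to the stated bounds $\exp(-\delta^2\mu/2)$ and $\exp(-\delta^2\mu/3)$. This amounts to the scalar inequalities $(1+\delta)\ln(1+\delta)-\delta\ge c\,\delta^{2}$ and $-\delta-(1-\delta)\ln(1-\delta)\le -c'\delta^{2}$ on $\delta\in(0,1)$, for the appropriate constants; each follows by checking that the auxiliary function vanishes at $\delta=0$ and has the correct sign of derivative throughout $(0,1)$, i.e.\ a one-line monotonicity argument in each case. This is where the care lies, since the exact constants and the direction of the inequality hinge on the sign of the derivative across the whole interval rather than merely near $\delta=0$, so I would verify the monotonicity globally on $(0,1)$ and not just infinitesimally. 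Finally, the two-sided statement $\Pr{|X-\Ex{X}|\ge\delta\Ex{X}}\le 2\exp(-\delta^2\Ex{X}/3)$ follows at once by a union bound over the two one-sided events, retaining the weaker of the two exponents.
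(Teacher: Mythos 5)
Your overall route --- Markov's inequality applied to $e^{tX}$, factorization of the moment generating function via independence, the bound $1+x\le e^{x}$, and optimization at $t=\ln(1+\delta)$ --- is the standard textbook proof; note that the paper itself gives no argument for this theorem and simply cites \cite{DP09}, so there is no paper-side proof to diverge from. Up through the closed forms $\bigl(e^{\delta}/(1+\delta)^{1+\delta}\bigr)^{\mu}$ and $\bigl(e^{-\delta}/(1-\delta)^{1-\delta}\bigr)^{\mu}$ your derivation is correct.

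The gap is in the final cleanup, at exactly the spot you flagged as delicate. To obtain the first stated bound you would need $(1+\delta)\ln(1+\delta)-\delta\ge\delta^{2}/2$ on $(0,1)$, but the global monotonicity check you propose yields the \emph{opposite}: with $g(\delta)=(1+\delta)\ln(1+\delta)-\delta-\delta^{2}/2$ one has $g(0)=0$ and $g'(\delta)=\ln(1+\delta)-\delta<0$ for all $\delta>0$, so $g<0$ throughout $(0,1)$. No repair is possible, because the upper-tail bound $\exp(-\delta^{2}\Ex{X}/2)$ as printed is false in general: taking $p_i=\mu/n$ with $n\to\infty$, the upper tail has large-deviation rate $(1+\delta)\ln(1+\delta)-\delta<\delta^{2}/2$, so for fixed $\delta$ and large $\mu$ the probability exceeds $e^{-\delta^{2}\mu/2}$. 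The theorem's two constants are transposed relative to the standard statement in the cited reference: your method correctly delivers $(1+\delta)\ln(1+\delta)-\delta\ge\delta^{2}/(2+\delta)\ge\delta^{2}/3$ for the upper tail, and $\delta+(1-\delta)\ln(1-\delta)\ge\delta^{2}/2$ for the lower tail (here the auxiliary function has derivative $-\ln(1-\delta)-\delta\ge 0$, so monotonicity does go the right way), i.e., upper tail at most $\exp(-\delta^{2}\Ex{X}/3)$ and lower tail at most $\exp(-\delta^{2}\Ex{X}/2)$. With these corrected one-sided constants, your union-bound conclusion $\Pr{|X-\Ex{X}|\ge\delta\Ex{X}}\le 2\exp(-\delta^{2}\Ex{X}/3)$ --- the only form of the theorem the paper actually invokes elsewhere --- goes through unchanged.
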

For a proof see \cite{DP09}.

To deal with moderate independency we have the following lemma whose proof can be found in \cite[Lemma 1.18]{doerrbook}. 
\begin{lemma}[Deviation bounds for moderate independency]\label{mod-chernoff}
	Let $X_1,\cdots, X_n$ be arbitrary binary random variables. Let
	$X_1^*, X_2^*, \cdots, X_n^*$
	be binary random variables that are mutually independent and
	such that for all $i$, $X_i$,
	is independent of $X_1, \cdots, X_{i-1}$. Assume that for all
	$i$ and all $x_1, . . . , x_{i-1} \in \{0, 1\}$,
	\[
	\Pr{X_i = 1|X_1= x_1,\cdots, X_{i-1}= x_{i-1}}\ge \Pr{X^*_i= 1}.
	\]
	Then for all $k \ge 0$, we have
	\[
	\Pr{ \sum_{i=1}^nX_i\le k}\le \Pr{\sum_{i=1}^n X^*_i\le k}
	\]
	and the latter term can be bounded by any deviation bound for independent
	random variables.
\end{lemma}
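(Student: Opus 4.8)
The plan is to prove this stochastic-domination inequality by constructing an explicit \emph{monotone coupling} of the two families on a single probability space, and then to read off the tail bound from a deterministic pointwise inequality. The key observation is that the hypothesis $\Pr{X_i=1\mid X_1=x_1,\dots,X_{i-1}=x_{i-1}}\ge \Pr{X_i^*=1}$ says exactly that, no matter what the past is, the conditional success probability of $X_i$ never drops below the constant success probability $p_i:=\Pr{X_i^*=1}$ of the independent variable $X_i^*$. This is precisely the condition under which one can couple the two sequences so that $X_i\ge X_i^*$ coordinatewise, after which everything follows by summing and comparing.

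Concretely, first I would introduce independent uniform random variables $U_1,\dots,U_n$ on $[0,1]$. Processing the indices in order $i=1,\dots,n$, I would set $Y_i=1$ iff $U_i\le q_i$, where $q_i:=\Pr{X_i=1\mid X_1=Y_1,\dots,X_{i-1}=Y_{i-1}}$ is the conditional success probability of $X_i$ evaluated at the already-generated values, and simultaneously set $Y_i^*=1$ iff $U_i\le p_i$. Since each $Y_i$ is drawn from the correct conditional law of $X_i$ given its prefix (the $U_i$ stays uniform and independent of $Y_1,\dots,Y_{i-1}$), the chain rule gives that $(Y_1,\dots,Y_n)$ has the same joint distribution as $(X_1,\dots,X_n)$; and each $Y_i^*$ is an independent threshold of an independent uniform, so $(Y_1^*,\dots,Y_n^*)$ is an independent family with the same marginals, hence the same joint law, as $(X_1^*,\dots,X_n^*)$.

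The decisive step is the pointwise domination. Because the hypothesis yields $q_i\ge p_i$ for every realization of the prefix, the event $\{U_i\le p_i\}$ is contained in the event $\{U_i\le q_i\}$; that is, $Y_i^*=1$ forces $Y_i=1$, so $Y_i\ge Y_i^*$ almost surely for every $i$. Summing gives $\sum_i Y_i\ge \sum_i Y_i^*$ deterministically, and therefore for every $k\ge 0$,
\[
\Pr{\sum_{i=1}^n X_i\le k}=\Pr{\sum_{i=1}^n Y_i\le k}\le \Pr{\sum_{i=1}^n Y_i^*\le k}=\Pr{\sum_{i=1}^n X_i^*\le k}.
\]
Since $\sum_i X_i^*$ is a sum of mutually independent binary variables, its lower tail is controlled by the standard Chernoff bound (Theorem~\ref{app:cher}), which justifies the closing remark of the statement.

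I expect the only genuine subtlety to be verifying that the sequentially generated family $(Y_i)$ faithfully reproduces the joint distribution of $(X_i)$: one must confirm that conditioning on $\{Y_1=x_1,\dots,Y_{i-1}=x_{i-1}\}$ leaves $U_i$ uniform and independent of the prefix, so that indeed $\Pr{Y_i=1\mid Y_1=x_1,\dots,Y_{i-1}=x_{i-1}}=q_i$. This follows directly from the independence of the $U_i$, and no other step requires more than elementary manipulation; all the work is concentrated in setting up the coupling so that the single comparison $q_i\ge p_i$ does the job for every coordinate at once.
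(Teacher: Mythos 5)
Your proof is correct, and the paper itself gives no proof of this lemma --- it simply cites Lemma~1.18 of the referenced book of Doerr, where the argument is precisely the monotone coupling you construct (generating both sequences from common uniforms $U_i$ thresholded at $q_i \ge p_i$, so that $Y_i \ge Y_i^*$ pointwise and the tail comparison follows). One small remark: you never need the statement's side condition that each $X_i^*$ is independent of $X_1,\dots,X_{i-1}$, since your coupling realizes the $X_i^*$'s afresh from the $U_i$'s; that condition matters only for formulations that compare the originally given families on their original space.
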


\section{Proofs}\label{apndx:proofs}
\begin{proof}[Proof of Lemma~\ref{WindMillLemma}]
\textbf{Upper Bound -- } 
Fix a node $u$ and w.l.o.g. assume that 
$u$ is denoted by pair $(0,0)$ in the torus.
With respect to $u$ and some positive number $r>0$  define four areas as follows
\begin{align*}
A_1(u)&:=\{(x, y) : 0\le y \le x/2  \text{ and }  (x, y)\in B_{r}(u)\},\\
A_2(u)&:=\{(x, y): 0\le -x\le y/2
\text{ and }  (x, y)\in B_{r}(u)
\},\\
A_3(u)&:=\{(x, y): 0\le -y\le -x/2 
\text{ and }  (x, y)\in B_{r}(u)
\},\\
A_4(u)&:=\{(x, y): 0\le x\leq -y/2
\text{ and }  (x, y)\in B_{r}(u)
\},
\end{align*}
which are shown in Fig. \ref{Upper}. It is easy to see that all four areas have the same size, that is 
\begin{align}\label{eq:lower}
|A_1(u)| &= \sum_{y=0}^{\lfloor r/3 \rfloor}\sum_{x=2y}^{r-y}1 \nonumber\\
&=\sum_{y=0}^{\lfloor r/3\rfloor}(r-3y+1) \nonumber\\
&\geq \sum_{y=0}^{\lfloor r/3\rfloor} 3y  \nonumber\\
&\ge r^2/8.
\end{align}

Let us fix some arbitrary  $1\le j\le K$ 
and for every node $u$ define indicator random variable  $X_{u, j}$ taking value $1$ if  $u$ has cached file $W_j$ and there is no node in $A_1(u)$ that has cached file $W_j$, and $0$ otherwise. Then,
\begin{align*}
\Pr{X_{u, j}=1}=
\left(1-\left(1-\frac{1}{K}\right)^M\right)\left(1-\frac{1}{K}\right)^{M(|A_1(u)|-1)},
\end{align*}
where the first term determines the probability that $u$ caches $W_j$ and the second one determines the probability that nodes in $A_1(u)\setminus\{ u\}$ do not cache $W_j$. 
By setting $r=5\sqrt{K\log n/M}$ and applying Inequality (\ref{eq:lower}) we have, 
\begin{align}\label{approx1}
\left(1-\frac{1}{K}\right)^{M(|A_1(u)|-1)} &= \mathrm{e}^{-\frac{25\log n}{8}+M/K}(1+o(1)) \nonumber\\
&=O(n^{-3}),
\end{align}
where it follows from \mbox{$1-1/K= \mathrm{e}^{-1/K}(1+o(1))$} and $M/K=o(1)$. Moreover, by using the approximation \mbox{$1-(1-1/K)^M=M(1+o(1))/K$}, we have
\[
\Pr{X_{u,j}=1}\le  \frac{M(1+o(1))}{K\cdot n^3}.
\] 

Therefore applying the union bound over all $n$ nodes and $K$ files implies  that w.h.p. for every $u$ there exists at least one node in $A_1(u)$ which shares a common file with $u$, supported that we choose $r= 5\sqrt{K\log n/M}$. We can similarly prove the same argument for $A_2(u)$, $A_3(u)$ and $A_4(u)$. 

Suppose that $u$ has cached file $W_j$, and we want to find an upper bound for the size of the Voronoi cell centered at $u$ corresponding to $W_j$. In order to do this let us define $v^1=(v^1_x, v^1_y)\in A_1(u)$ to be the nearest node to $u$ with file $W_j$. Similarly define $v^i\in A_i(u)$, $2\le i\le 4$. W.l.o.g.
assume that $u$  is the origin and the nodes are in a $xy$-coordinate system. Define
 \[
B:=\{(x,y) : v^3_x\le x\le v^1_x \text{ and } v^4_y\le  y\le v^2_y\}.
\]
Now we show that the Voronoi cell of $u$ is contained in $B$, and thus the size of $B$ is an upper bound to the size of the Voronoi cell.
Consider Fig. \ref{Upper}. Let us consider  node  $w$ in the complement of $B$ with $w_y>v^2_y$ and $w_x>0$. Assume that  $P_{uw}$ is a shortest path from $w$ to $u$ that passes node $(0, v^2_y)$. By definition of $A_2(u)$, we know  the length of a shortest  path from    $(0, v^2_y)$ to $(v^2_x, v^2_y)$ is $|v^2_x|\leq v^2_y/2$. This shows that $w$ is closer to $v^2$ than $u$, and by definition it  does not belong to the Voronoi cell centered at $u$. We similarly can show that each arbitrary node $w\in B^c$ is closer to either $v^i$'s rather than $u$.  Since we arbitrarily choose $u$ and $1\le j\le K$, there is  sub-grid $B$ that contains every Voronoi cell in $\mathcal{V}_j$, centered at any given $u$.  
So the size of any Voronoi cell centered at an arbitrary node $u$ is bounded from above by $|B|$, which is at most $4r^2=O(K\log n/M)$.

\begin{figure}
\begin{center}
\includegraphics[width=0.5\textwidth]{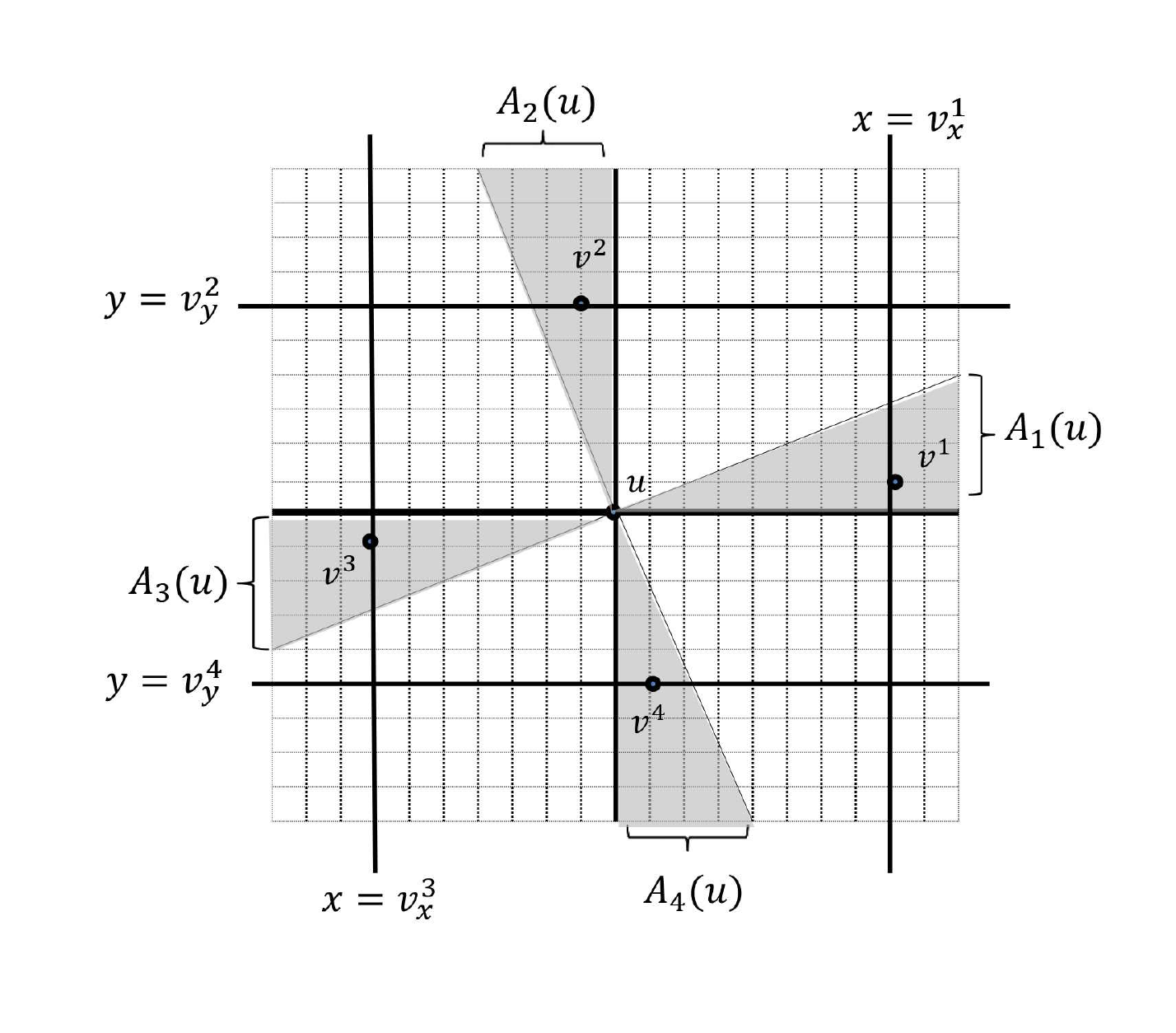}
\end{center}
\caption{Demonstration of regions $A_1(u),\ldots,A_4(u)$ used in the upper bound proof of Lemma~\ref{WindMillLemma}.}\label{Upper}
\end{figure}

\textbf{Lower Bound --} Let us define indicator random variable $Y_{u, j}$ for every $u$ and some fixed  $j$ taking value $1$ if $u$ has cached file $W_j$ and there is no  $v\in B_{r}(u)$ that has cached file $W_j$, and $0$ otherwise.	
Note that $|B_r(u)\setminus\{u\}\}|=2r(r+1)$.
Then we have  
\[
\Pr{Y_{u,j}=1}=\left(1-\left(1-\frac{1}{K}\right)^M\right)\left(1-\frac{1}{K}\right)^{M[2r(r+1)]}.
\]
By setting $r=\sqrt{{\epsilon\cdot K\cdot \log n /4M}}$ and using similar approximations used in (\ref{approx1}) we have
\[
p \triangleq \Pr{Y_{u,j}=1}=\frac{M(1+o(1))}{K\cdot n^{\epsilon/2}}.
\]
Let $Y_j=\sum_{u\in}  Y_{u, j}$. Then, we have the following claim.

\begin{clm}
For every $j$ we have $Y_j\ge 1$ with probability $1-o(1)$.
\end{clm}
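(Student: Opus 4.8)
The plan is to prove the Claim by the second moment method applied to $Y_j=\sum_u Y_{u,j}$. First I would record the first moment: since each node is an independent copy, $\Ex{Y_j}=n\,p$ with $p=\Pr{Y_{u,j}=1}=\frac{M(1+o(1))}{K\,n^{\epsilon/2}}$. Substituting $K=n^{1-\epsilon}$ and $M=\Theta(1)$ gives $\Ex{Y_j}=\Theta(n^{\epsilon/2})\to\infty$. Then Chebyshev's inequality yields
\[
\Pr{Y_j=0}\le \Pr{|Y_j-\Ex{Y_j}|\ge \Ex{Y_j}}\le \frac{\Var{Y_j}}{(\Ex{Y_j})^2},
\]
so it suffices to show $\Var{Y_j}=o\big((\Ex{Y_j})^2\big)$.

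Next I would decompose $\Var{Y_j}=\sum_u\Var{Y_{u,j}}+\sum_{u\neq v}\cov(Y_{u,j},Y_{v,j})$ and control each piece through the geometric structure of the balls $B_r(u)$. The diagonal term is harmless: $\sum_u\Var{Y_{u,j}}\le \sum_u\Ex{Y_{u,j}}=\Ex{Y_j}=o\big((\Ex{Y_j})^2\big)$ since $\Ex{Y_j}\to\infty$. For the covariances I would argue that only a thin shell of pairs matters. If $d_G(u,v)>2r$, then $B_r(u)$ and $B_r(v)$ are disjoint, so the two indicators depend on the placements of disjoint node sets and are independent, giving $\cov=0$. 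If $0<d_G(u,v)\le r$, then $v\in B_r(u)$, so $Y_{u,j}=1$ forces $v$ \emph{not} to store $W_j$ while $Y_{v,j}=1$ forces $v$ to store it; the product is identically zero and $\cov\le 0$. Hence only pairs with $r<d_G(u,v)\le 2r$ contribute, and by the torus symmetry there are at most $n\cdot\Theta(r^2)$ ordered such pairs.

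The crux is a sufficiently sharp bound on $\Ex{Y_{u,j}Y_{v,j}}$ for a contributing pair. This quantity is the probability that both $u$ and $v$ cache $W_j$ and that no \emph{other} node of $B_r(u)\cup B_r(v)$ caches it. Using $|B_r(u)\cup B_r(v)|\ge |B_r(u)|=2r(r+1)+1$ together with exactly the estimate $(1-1/K)^{M\cdot 2r(r+1)}=n^{-\epsilon/2}(1+o(1))$ already used to compute $p$, I would obtain $\Ex{Y_{u,j}Y_{v,j}}\le (M/K)^2\,n^{-\epsilon/2}(1+o(1))$. Summing over the $\le Cn r^2$ contributing pairs and dividing by $(\Ex{Y_j})^2=n^2p^2=n^2(M/K)^2 n^{-\epsilon}(1+o(1))$ leaves a bound of order $r^2 n^{\epsilon/2}/n$; plugging in $r^2=\epsilon K\log n/(4M)=\Theta(n^{1-\epsilon}\log n)$ gives $\Theta\!\big(\log n / n^{\epsilon/2}\big)=o(1)$. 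Combining the two pieces shows $\Var{Y_j}=o\big((\Ex{Y_j})^2\big)$ and hence $Y_j\ge 1$ with probability $1-o(1)$.

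I expect the main obstacle to be precisely this covariance estimate. The naive bound $\Ex{Y_{u,j}Y_{v,j}}\le \Pr{u,v\text{ both cache }W_j}=(M/K)^2(1+o(1))$ is too weak: it leaves a residual ratio of order $\log n$ rather than $o(1)$. The saving factor $n^{-\epsilon/2}$ comes only from retaining the requirement that the \emph{entire} overlapping neighborhood $B_r(u)\cup B_r(v)$ be free of further copies of $W_j$, so the care lies in not discarding that constraint. Everything else reduces to the routine bookkeeping of pair counts and the approximations $1-1/K=e^{-1/K}(1+o(1))$ and $1-(1-1/K)^M=M(1+o(1))/K$ used earlier in the lemma.
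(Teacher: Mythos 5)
Your proposal is correct and follows essentially the same route as the paper: a second-moment (Chebyshev) argument with the identical three-way case split on $d_G(u,v)$ (diagonal, zero-product pairs at distance at most $r$, independent pairs beyond $2r$), with the shell pairs $r<d_G(u,v)\le 2r$ carrying the only nontrivial covariance. Your key estimate $\Ex{Y_{u,j}Y_{v,j}}\le (M/K)^2 n^{-\epsilon/2}(1+o(1))$, obtained by retaining the emptiness of one full ball, is numerically the same as the paper's bound $\Pr{Y_{u,j}=1\mid Y_{v,j}=1}\Pr{Y_{v,j}=1}\le \frac{M(1+o(1))}{K}\,p$, and your closing ratio $\Theta(\log n/n^{\epsilon/2})=o(1)$ matches the paper's $28\log n/(np)$.
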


This claim shows that there exists at least a Voronoi cell of size $\Theta(r^2)=\Theta(K \log n /M)$ which concludes the proof.

Now, in order to prove the claim note that
\[
\Ex{Y_j}=\sum_{u}\Ex{Y_{u,j}}=n\cdot\frac{M(1+o(1))}{K\cdot n^{\epsilon/2}}=Mn^{\epsilon/2}(1+o(1)).
\]
Also, we know that  
\begin{align}\label{var:main}
\Var{Y_j} &= \sum_{u, v}\mathrm{Cov}(Y_{u,j}, Y_{v,j}) \nonumber\\
&=\sum_{u, v}\Ex{Y_{u,j} Y_{v,j}} - \Ex{Y_{u,j}}\Ex{Y_{v,j}} \nonumber\\
&=\sum_{u, v} \Big( \Pr{Y_{u,j}=1, Y_{v,j}=1} \nonumber\\
&\quad\quad\quad -\Pr{Y_{u,j}=1}\cdot \Pr{Y_{v,j}=1}\Big),
\end{align}
where the last equality holds because 
$Y_{u, j}$'s are indicator random variables. 
It is easy to see that for every $u$
and $v$ with $d_G(u, v)>2r$, $\mathrm{Cov}(Y_{u,j}, Y_{v, j})=0$ as cache content placement at different nodes are independent processes. So we only consider pairs $u$ and $v$, with $d_G(u, v)\le 2r$.  Then for each pair of nodes three following cases should be considered:
\begin{itemize}
	\item $u=v$:
	In this case we have 
	\begin{align}\label{var:fisrt}
	\Pr{Y_{u,j}=1, Y_{v,j}=1}-\Pr{Y_{u,j}=1}\cdot \Pr{Y_{v,j}=1}\nonumber \\
	=\Pr{Y_{u,j}=1}-p^2=p(1-p).
	\end{align}
	
	\item $0<d_G(u, v)\le r$:
	By definition of indicator random variables $Y_{u, j}$'s, we have
	\begin{align}\label{var:sec}
	&\Pr{Y_{u,j}=1, Y_{v,j}=1}-\Pr{Y_{u,j}=1}\cdot \Pr{Y_{v,j}=1}\nonumber\\
	&=0-p^2. 
	\end{align}
	\item $r<d_G(u, v)\leq 2r$:
	In this case we have  	
	\begin{align}\label{var:thi}
	&\Pr{Y_{u,j}=1, Y_{v,j}=1}-\Pr{Y_{u,j}=1}\cdot \Pr{Y_{v,j}=1} \nonumber\\
	&=\Pr{Y_{u,j}=1|Y_{v,j}=1}\Pr{Y_{v,j}=1}-p^2 \nonumber \\ 
	&\leq \frac{M(1+o(1))}{K}p-p^2\leq \frac{2M}{K}p. 
	\end{align}
\end{itemize}

Now let us split the summation (\ref{var:main}) based on $d_G(u,v)$ as follows
\begin{align*}
\Var{Y_j} &=\sum_{u}\mathrm{Cov}(Y_{u,j},Y_{u,j})\\
&\quad+\sum_{u} \sum_{v: 0 < d_G(u, v)\leq r}\mathrm{Cov}(Y_{u,j}, Y_{v,j}) \\
&\quad+ \sum_{u} \sum_{v: r<d_G(u, v)\leq 2r}\mathrm{Cov}(Y_{u,j}, Y_{v,j}).
\end{align*}

Applying results (\ref{var:fisrt}-\ref{var:thi}) yields
\begin{align*}
\Var{Y_j} &\leq np(1-p)-n|B_r(u)|p^2+n|B_{2r}(u)|\frac{2Mp}{K} \nonumber\\
&\leq np+ 4nr(2r+1)\frac{2Mp}{K} \nonumber\\
&\leq np+6\epsilon np \log n \nonumber\\
&\leq 7np\log n,  
\end{align*}
where we use the fact that 
$4r(2r+1)\le 9r^2\leq 3\epsilon K \log n$.
Applying Chebychev's inequality  leads to 
\begin{align*}
\Pr{|Y_j-\Ex{Y_j}|\ge \Ex{Y_j}/2}&\le \frac{4\Var{Y_j}}{\Ex{Y_j}^2}\\
&\le \frac{28 np\log n}{n^2p^2}\\
&=\frac{28\log n}{np}=o(1).
\end{align*}
Therefore, $Y_j$ concentrates around its mean, \ie, $np = \Theta(n^{\epsilon/2})$, which proves the claim.
\end{proof}

%


\begin{proof}[Proof of Theorem \ref{thm:Strategy_I_Large}]
	To establish  upper bound $O(\log n)$ for the maximum load, we follow the first part of proof of Theorem \ref{thm:imblance}.
	To obtain a lower bound, consider an arbitrary server $u$ that has cached file set $S$ with  $s$ distinct files. Note that by  Lemma 
	\ref{lem:GoodnessProperty}, we have for every node $u$, $s=\Theta(M)$ with high probability.
	Let us define the indicator random variable
	$X_{u, j}$, $W_j\in S$, taking $1$ if the nearest replica of $W_j$ is outside of $B_r(u)$, where $r=\sqrt{K/2M}$  and zero otherwise. 
	It is easy to see that $X_{u,j}$'s are correlated. For example, consider set of files $T=\{W_{j_1}, W_{j_2},\ldots, W_{j_t}\}\subset S$, where $X_{u,j'}=1$ for every $W_j'\in T$. Then conditional on this event, each node in $B_r(u)$ has cached files from a subset of the library of size $K-|T|$.
	Then probability that a node in $B_r(u)$ caches $W_j$ is at most $M/(K-t)$.
	Hence,  for every $W_j\in S$,
	\begin{align*}
	&\Pr{X_{u,j}=1 |\{X_{u, j'}=x_{u,j'}, W_{j'}\in S\setminus\{W_j\}\}}\\
	&\geq\left(1-\frac{M}{K-\sum_{ W_{j'}\in S\setminus\{W_j\}}x_{u, j'}}\right)^{2r(r+1)}\\
	&\geq \left(1-\frac{M}{K-M+1}\right)^{2r(r+1)}=\mathrm{e}^{-\Omega(1)}=p,
	\end{align*}
	where $|B_r(u)\setminus\{u\}|=2r(r+1)=\Theta(K/M)$ and hence $p$ is a constant.
	Let $Z=\sum_{W_j\in S}X_{u,j}$ and then $\Ex{Z}\geq s\cdot p$. Using a Chernoff bound for moderately correlated indicator random variables (e.g., see Lemma \ref{mod-chernoff} ) implies that 
	\[
	\Pr{Z<sp/2}=o(1/n^2).
	\] 
	Therefore $B_r(u)$ does not contain any replica of at least $p/2$ fraction of files cached at $u$, namely $S'$. Using the union bound over all nodes we deduce the similar statement for every node with probability at least $1-o(1/n)$.
	Therefore, for every $u$ we have,
	\[
	\Pr{\text{$u$ severs a request}} \geq \frac{|B_{r/2}(u)|}{n}\cdot \frac{|S'|}{K}=\Omega(1/n)
	\]
	where it follows from $|S'|=\Theta(M)$, $|B_{r/2}(u)|=\Theta(K/M)$.
	Since there are $n$ requests, it is easy to see that   the load of each server is bounded from below by a  Poisson distribution $\mathrm{Po}(c)$, where $c$ is a constant. 
	On the other hand, it is known that (e.g., see \cite{Devroye85} ) the maximum number taken by  $n$ Poisson distribution $\mathrm{Po}(c)$ is $\Omega(\log n/\log\log n)$ w.h.p. and hence the lower bound is proved.

\end{proof}

\begin{proof}[Proof of Theorem~\ref{thm:Strategy_I_CommCost}]	
	Assume a request from an arbitrary node $u$ for file $W_j$. The probability that this file is cached at another node $v$ is $q_j :=1-(1-p_j)^M$. Cache content placement at different nodes is independent. Thus, the  number of nodes which should be probed is a geometric random variable with success probability $q_j$. This results in the average $1/q_j$ trials that leads to expected distance of 
	\begin{equation}
	\Theta\left(\frac{1}{\sqrt{q_j}}\right)=\Theta\left(\frac{1}{\sqrt{1-(1-p_j)^M}}\right).
	\end{equation}
	When averaged over different files we will have
	\begin{equation}
	C=\sum_{j=1}^K {p_j\Theta\left(\frac{1}{\sqrt{1-(1-p_j)^M}}\right)}.
	\end{equation}
	\begin{itemize}
		\item For uniform distribution we have $p_j=1/K$ and then
		\begin{equation}
		C=\Theta(\sqrt{K/M}).
		\end{equation}
		\item For Zipf distribution with $M=\Theta(1)$ we have
		\begin{align}\label{final}
		C &= \sum_{j=1}^K {p_j\Theta\left(\frac{1}{\sqrt{1-(1-p_j)^M}}\right)} \nonumber \\
		&= \sum_{j=1}^K p_j \Theta \left( \frac{1}{\sqrt{p_j M}} \right) \nonumber \\ 
		&= \Theta\left(      		\frac{\sum_{j=1}^Kj^{-\gamma/2}}{\left(M\sum_{j=1}^K j^{-\gamma}\right)^{1/2}}  \right). \nonumber \\
		\end{align}
	\end{itemize}
	Define $\Lambda(\gamma) :=\sum_{j=1}^Kj^{-\gamma}$, for every $\gamma$.		
	On the other hand  it is known that  for every $\gamma>0$ (\eg, see \cite{JiTLC15})

	\begin{equation}\label{zipfestimate}
	\Lambda(\gamma)=\left\{
	\begin{array}{ll}
	\Theta\left(K^{1-\gamma}\right),  & \quad 0 < \gamma <1, \\
	\Theta\left(\log K\right), &\quad \gamma=1, \\
	\Theta(1), &\quad \gamma>2.
	\end{array}
	\right.	
	\end{equation}

	Now inserting the above equations  into (\ref{final}) completes the proof. 
\end{proof}

\section{Proof of Example \ref{ex:prop_M1}} \label{apndx:Examp-M1}

\begin{proof}
	It is easy to see that for $M=1$, the number of caching servers with a specific file, say $W_j$ denoted by $S_j$, is distributed as a $\mathrm{Bin}(n, p_j)$. Thus applying a  Chernoff bound for $S_j$ (\eg, see Appendix~\ref{app:bounds}) implies that 
	\[
	\Pr{|S_j-\Ex{S_j}|\geq \Ex{S_j}/2}\le 2\exp({-p_j n/12}).
	\]
	Moreover, let $R_j$ denote the number of requests for file $W_j$, which is the sum of $n$ i.i.d. $\mathrm{Bin}(n, p_j)$ random variables. Again applying a Chernoff bound (\eg, see Appendix~\ref{app:bounds}) for Poisson random variables yields that
	\[
	\Pr{|R_j-\Ex{R_j}|\geq \Ex{R_j}/2}\le 2\exp({-p_j n/12}).
	\]
	Notice that $\Ex{S_j}=\Ex{R_j}=np_j$. 
	Suppose that  $\mathcal{A}_j$ denotes the event that $|S_j-\Ex{S_j}|\leq \Ex{S_j}/2$ and $|R_j-\Ex{R_j}|\leq \Ex{R_j}/2$. Then we have that $\Pr{\mathcal{A}_j}\ge 1-4\exp(-p_jn/12)$. Also define $\mathcal{E}_j$  to be  the event that two-choice model with $S_j$ bins (caching servers) and $R_j$ balls (requests) achieves maximum load $R_j/S_j+\Theta(\log\log S_j)$. It is shown that this event happens with probability  $1-O(1/S_j^c)$, for every constant $c$ (e.g., see \cite{ABKU99}).
	So we have that 
	\begin{align*}
	\Pr{\mathcal{E}_j}&=\Pr{\mathcal{E}_j| \mathcal{A}_j}\Pr{\mathcal{A}_j}+
	\Pr{\mathcal{E}_j| \neg\mathcal{A}_j}\Pr{\neg\mathcal{A}_j}\\
	&>(1-2(p_j n)^{-c})(1-4\exp(-p_jn/12))\\
	&+ (1-2(p_j n)^{-c})(4\exp(-p_jn/12))\\
	&\ge 1-8(p_j n)^{-c}.
	\end{align*}
	Since we have $K$ disjoint subsystems, the union bound over all subsystems shows that the two choice model does achieve  the desired maximum load with probability 
	$1-8\sum_{j=1}^K(np_j)^{-c}=1-o(1)$ which concludes the proof due to example's assumption on popularity profile. 
	
	Now we show that the Uniform and Zipf distributions satisfy the example's assumption. When $\mathcal{P}$ is the Uniform distribution over $K$ files, $\forall j,\ p_j\cdot n=n^{\epsilon}$. Now by setting $c=3/\epsilon$, we have that 
	\[
	\sum_{j=1}^K(np_j)^{-c}=K({1}/{n^\epsilon})^c=K/n^3=o(1/n^2).
	\]
	
	Also, for Zipf distribution we have
	\[
	p_j=\frac{j^{-\gamma}}{\sum_{j=1}^K j^{-\gamma}}=\frac{j^{-\gamma}}{\Lambda(\gamma)}.
	\]
	Depending on $\gamma$,  we consider two cases,
	\begin{itemize}
		\item $\gamma\ge 1$: For every $c>1$ we have
		\[
		\left(\frac{\Lambda(\gamma)}{n}\right)^c\Lambda(\gamma c)=\Theta\left(\frac{\log^c K}{n^c}\right)\Lambda(\gamma c)=o(1),
		\]
		where we used $K<n$ and Equality (\ref{zipfestimate}).
		\item $0< \beta<1$: By setting 	 $c=2/ \gamma$ and using the fact that $K<n$, we have
		\begin{align*}
		\left(\frac{\Lambda(\gamma)}{n}\right)^c
		\Lambda(\gamma c)=&\Theta\left( \frac{K^{(1-\gamma)c}}{n^c}\right)
		\le {n^{(1-\gamma)c-c}}\Lambda(\gamma c)\\
		=&n^{-\gamma c}\Lambda(\gamma c)=o(1),
		\end{align*}
		where we applied Equality (\ref{zipfestimate}).
	\end{itemize}

\end{proof}


 \end{document}